\documentclass[11pt]{elsarticle}
\usepackage{geometry}                
\geometry{letterpaper}                   
\usepackage{graphicx}
\usepackage{amsmath}
\usepackage{amssymb}
\usepackage{epstopdf}
\DeclareGraphicsRule{.tif}{png}{.png}{`convert #1 `dirname #1`/`basename #1 .tif`.png}

\usepackage{color}
\usepackage{bbm}

\usepackage{amsthm}
\usepackage{enumitem}


\newtheorem{lemma}{Lemma}
\newtheorem{cor}{Corollary}

\usepackage[tableposition=top]{caption}

\newcommand{\bx}{\mathbf{x}}
\newcommand{\bv}{\mathbf{v}}
\newcommand{\bB}{\mathbf{B}}
\newcommand{\bb}{\mathbf{b}}
\newcommand{\bE}{\mathbf{E}}
\newcommand{\bF}{\mathbf{F}}
\newcommand{\bG}{\mathbf{G}}
\newcommand{\bu}{\mathbf{u}}
\newcommand{\bj}{\mathbf{j}}
\newcommand{\bI}{\mathbf{I}}
\newcommand{\Dt}{\Delta t}
\newcommand{\dt}{\delta t}
\newcommand{\nph}{^{n+1/2}}
\newcommand{\Oc}{\Omega_c}

\title{Asymptotic-preserving gyrokinetic implicit particle-orbit integrator for arbitrary electromagnetic fields\tnoteref{t1}}
\tnotetext[t1]{This work was performed under the auspices of the U.S.\ Department of Energy by LLNL and LANL under contracts DE-AC52-07NA27344, DE-AC52-06NA25396, and supported by the Exascale Computing Project (LFR) (17-SC-20-SC), a collaborative effort of the U.S. Department of Energy Office of Science and the National Nuclear Security Administration, and the U.S. DOE Office for Advanced Scientific Computing Research (LC).}
\author[1]{L.F.\ Ricketson \corref{cor1}}
\ead{ricketson1@llnl.gov}
\author[2]{L.\ Chac\'{o}n}
\ead{chacon@lanl.gov}

\cortext[cor1]{Corresponding author}
\address[1]{Lawrence Livermore National Laboratory, Livermore, CA 94550, United States}
\address[2]{Los Alamos National Laboratory, Los Alamos, NM 87545, United States}

\begin{document}

\begin{abstract}
We extend the asymptotic preserving and energy conserving time integrator for charged-particle motion developed in [Ricketson \& Chac\'{o}n, JCP, 2020] to include finite Larmor-radius (FLR) effects in the presence of electric-field length-scales comparable to the particle gyro-radius (the gyro-kinetic limit).  
We introduce two modifications to the earlier scheme.  The first is the explicit gyro-averaging of the electric field at the half time-step, along with an analogous modification to the current deposition, which we show preserves total energy conservation in implicit PIC schemes.  The number of gyrophase samples is chosen adaptively, ensuring proper averaging for large timesteps, and the recovery of full-orbit dynamics in the small time-step limit.  The second modification is an alternating large and small time-step strategy that ensures the particle trajectory samples gyrophases evenly.  We show that this strategy relaxes the time-step restrictions on the scheme, allowing even larger speed-ups than previously achievable.   We demonstrate the new method with several single-particle motion tests in a variety of electromagnetic field configurations featuring gyro-scale variation in the electric field.  The results demonstrate the advertised ability to capture FLR effects accurately even when significantly stepping over the gyration time-scale.  
\end{abstract}
\maketitle


\section{Introduction}
The gyrokinetic approximation of the Vlasov equation (GK) \cite{abel2013multiscale, brizard2007foundations, dubin1983nonlinear, hahm1988nonlinear, lee1983gyrokinetic} has enabled tremendous advances in both theoretical and computational understanding of strongly magnetized plasmas.  A key advantage of the gyrokinetic equations is that they allow stepping over the gyration time-scale,  $\Oc^{-1} = (qB/m)^{-1}$, even when length scales of the order of the gyroradius are present in the electromagnetic fields (leading to so-called finite-Larmor-radius effects, FLR). This, in turn, affords enormous computational speed-ups of the GK model vs. full-orbit Vlasov models.  One key application is magnetic confinement fusion, where gyrokinetic simulations remain the state of the art.  

However, the gyrokinetic approach is not without drawbacks.  Realistic boundary conditions for the gyrokinetic equations are difficult to formulate and implement due to the non-zero spatial extent of the simulated particles, which are effectively rotating rings of charge.  There is also increasing concern that the gyrokinetic approximation is insufficiently accurate in portions of the domain for problems of interest.  Examples include field-reversed configurations (FRCs) and the scrape-off layer (SOL) in certain tokamaks \cite{Chen_2009, De_2019, genoni2010fast, Hu_2018, pandit2018conversion, sturdevant2016fully, Sturdevant_2017, STURDEVANT2016519, wang:2020aa}.

As a result of these limitations, there is a growing interest in enabling the simulation of full-orbit kinetic systems when required, at least for ions \cite{sturdevant2016fully}, without losing the computational advantage of GK models when appropriate.  This can be accomplished by developing so-called asymptotic-preserving (AP) algorithms that can track the full particle orbit when the time-step $\Dt$ is small compared to the gyroperiod, $\Oc \Dt \lesssim 1$, and revert to the gyrokinetic treatment when $\Oc \Dt \gg 1$.  The desire to step over a physical time-scale clearly motivates looking at implicit time integration. We are also motivated to produce a scheme that can eventually be coupled with strictly charge- and energy-conserving implicit particle-in-cell (PIC) schemes \cite{chacon2013charge, chen2014energy, chen2015multi, chen2011energy, chen2012efficient, chen2014fluid, taitano2013development} without loss of conservation properties.


Recently, the authors introduced \cite{ricketson2019energy} an implicit, strictly energy conserving,  AP particle-push algorithm able to recover guiding-center {\em drift}-kinetic motion (i.e., correct gyroradius $\rho$ and all first-order drifts) even when $\Omega_c \Delta t \gg 1$. 
The scheme 
overcomes many of the obstacles mentioned above; namely, the simulated particles can deal with realistic boundary conditions as in the full-orbit kinetic model, and it enables a unified treatment of magnetized and unmagnetized regions in a single scheme. The scheme has been recently reformulated for optimal nonlinear convergence performance \cite{koshkarov2022fast} when inverting the associated nonlinear algebraic system stemming from the implicit discretization.  However, the scheme tacitly assumes that the gyroradius is much smaller than any other scales of interest (because fields are not gyroaveraged), and therefore it is not expected to capture the finite Larmor radius (FLR) effects of the \textit{gyro}-kinetic limit.  In fact, our numerical results here will confirm that it does not.  

There have been several recent studies {\cite{,filbet2016asymptotically,filbet2017asymptotically,chartier2019uniformly,chartier2020uniformly,hairer2022large,filbet2023asymptotically} attempting to develop AP particle-orbit integration schemes similar in spirit to that in Ref. \cite{ricketson2019energy}, with varying degrees of generality and complexity. However, none of these schemes conserve energy exactly (and therefore are not suitable for coupling with modern strictly energy-conserving implicit PIC codes), and to our knowledge none of them are able to deal with FLR effects to reproduce the gyrokinetic limit.

The purpose of the present work is to generalize the AP energy-conserving scheme in Ref. \cite{ricketson2019energy} to capture the necessary FLR effects in the gyrokinetic limit.  In particular, we will show that our generalization can handle an electric field $\bE$ with spatial scales comparable to the gyroradius $\rho$, i.e., $k_\perp \rho \sim 1$, with $k_\perp$ a measure of the characteristic inverse electric-field length scale.  
Our basic strategy for capturing FLR effects is relatively standard: rather than evaluating the electric field at the instantaneous particle location, we compute a gyro-averaged field using some number $n_g$ of equispaced samples along a gyro-orbit estimated from the instantaneous particle state.  The subtlety arises from two factors: our desire to maintain exact total energy conservation, and the need to preserve second-order convergence to the full-orbit motion in the limit $\Omega_c \Delta t \rightarrow 0$.  To the first point, we show that an analogous and physically reasonable modification to current deposition preserves exact energy conservation in implicit PIC schemes.  To the second, we describe a method of adaptively choosing $n_g$ based on local field conditions such that it tends to one in the small time-step limit, and we show that this single gyrophase sample can be chosen in such a way that second-order accuracy is preserved.  

In the development of this FLR capturing scheme, we have also discovered a technique for relaxing previously derived time-step constraints on the scheme \cite{ricketson2019energy}.  Recall from the reference that, in certain regimes, the scheme's time-step is constrained by 
\begin{equation}
	\Omega_c \Delta t \lesssim \min \left\{ \left( \rho \frac{\| (\nabla B)_\perp \|}{B} \right)^{-1/2}, \sqrt{\Omega_c \tau_{res}} \right\},
\end{equation}
where $\tau_{res}$ is the smallest time-scale one wishes to resolve in the simulation.  While this does permit stepping over the gyration time-scale, the presence of square roots means the scheme still requires smaller time-steps than a truly gyrokinetic particle push.  We show here that, by alternating a large ($\Omega_c \Delta t \gg 1$) time-step followed by a smaller one ($\Omega_c \delta t \sim 1$), we can relax these time-step constraints to:
\begin{equation}
	\Omega_c \Delta t \lesssim \min \left\{ \left( \rho \frac{\| (\nabla B)_\perp \|}{B} \right)^{-1}, \Omega_c \tau_{res} \right\}.
\end{equation}
In strongly magnetized plasmas, both arguments in the minimum function are much larger than one, so removing the square roots represents a considerable potential speedup that is well worth the approximate factor of two penalty paid for taking a smaller time-step at every other step.

We selectively compare the new scheme (as a test-particle scheme, not coupled with PIC at this point) against the Boris push, the present scheme's precursor \cite{ricketson2019energy}, the scheme of Brackbill, Forslund and Vu (BFV) \cite{brackbill1985simulation, vu1995accurate}, and the standard Crank-Nicolson scheme in a variety of field configurations.  We find that the modifications do enable accurate prediction of single-particle orbits also when $k_\perp \rho \sim 1$, making it the first full-orbit AP scheme to do so while stepping over the gyration time-scale.  We also show that the alternating time-step approach enables accurate results at larger time-steps than previously achievable.  

The remainder of the article is structured as follows.  In Section 2, we review the previous AP scheme including its motivation, derivation, time-step restrictions, and relationship to implicit PIC.  In Section 3, we motivate and present the new scheme.  We show how to modify implicit PIC methods to be compatible with both the new scheme and energy conservation, and we derive new, relaxed time-step restrictions on the scheme and an adaptive time-stepping strategy that respects those restrictions.  We present numerical results in Section 4 and conclude in Section 5.  


\section{Background}

\subsection{Prior energy-conserving and asymptotic-preserving time integrator}
The authors' previous energy-conserving and asymptotic-preserving (AP) scheme \cite{ricketson2019energy} is built upon Crank-Nicolson while borrowing ideas from \cite{brackbill1985simulation, genoni2010fast, vu1995accurate}.  The scheme is defined by (using the reformulation of the effective force proposed in Ref. \cite{koshkarov2022fast}):
\begin{equation} \label{eq:CBFVdef}
\begin{split}
	\bx^{n+1} &= \bx^n + \Dt \bv^{n+1/2}, \\
	\bv^{n+1} &= \bv^n + \Dt \frac{q}{m} \left[ \bE^{n+1/2} + \bv^{n+1/2} \times \bB^{n+1/2} + \mathbf{F}_{eff} \right], \\
    \mathbf{F}_{eff} &= \mathbf{v}\times\left(\frac{\mathbf{G}^{n+1/2}\times\mathbf{v}_{\perp}}{v_{\perp}^{2}}\right) \\
	\mathbf{G}\nph &= F_{B,\parallel} \bb + \frac{(\bI - \widehat{\bv}_E \widehat{\bv}_E) \cdot \bF_{B,\perp}}{1 - \eta^2/2} + \frac{2}{\eta^2} \widehat{\bv}_E \left( \widehat{\bv}_E \cdot \bF_{B,\perp} + \mathbbm{1}_{v_E > u} F_{B,\parallel} \frac{v_\parallel}{v_\perp} \right), \\
	\bF_{B} &= -\tilde{\mu} \nabla B^{n+1/2}, \qquad \tilde{\mu} = m \frac{ \left\| \bv_\perp^{n+1} - \bv_\perp^n \right\|^2}{8B^{n+1/2}}.
\end{split}
\end{equation}
In the above, quantities at the half-step are defined by
\begin{equation}
\begin{split}
	\bv\nph &= \left( \bv^n + \bv^{n+1} \right)/2, \quad \bx\nph = \left( \bx^n + \bx^{n+1} \right) / 2, \\
	\bE\nph &= \bE \left( \bx\nph, t\nph \right), \quad \bB_p\nph = \bB \left( \bx\nph, t\nph \right).
\end{split}
\end{equation}
Moreover, $\bv_E = \bv_E\nph = \bE\nph \times \bB\nph / (B\nph)^2$, $\widehat{\bv}_E = \bv_E / v_E$, $\bb = \bb\nph = \bB\nph / B\nph$, $\bu = \bu\nph = \bv_\perp\nph - \bv_E$, and $\eta = \min \left\{ 1, u/v_E \right\}$.  All velocities lacking a superscript are evaluated at time $t\nph$.  The subscripts $\perp$ and $\parallel$ refer to vector components orthogonal and parallel to $\bb$, respectively.  Finally, $\mathbbm{1}$ denotes the indicator function - that is, $\mathbbm{1}_A$ is one if $A$ is true and zero otherwise.  

The full derivation of the scheme is found in \cite{ricketson2019energy}, but we briefly summarize the motivation and overarching approach here.  If one were to set $\bF_{eff} = 0$, the first two lines above are just Crank-Nicolson (CN).  Without modification,  CN achieves the correct gyroradius for arbitrary $\Dt$ (the reasoning is the same as that used to conclude that the Boris integrator features an artifically large gyroradius for $\Oc \Dt \gg 1$, see e.g.\ \cite{birdsall2004plasma, parker1991numerical}).  Moreover, the analysis of Boris in \cite{parker1991numerical} can be straightforwardly generalized to Crank-Nicolson to show that it also captures all first-order guiding center drifts \textit{except} for the $\nabla B$ drift and mirror force.  

To extend Crank-Nicolson to capture the missing $\nabla B$ force, Brackbill, Forslund and Vu (henceforth BFV) \cite{brackbill1985simulation, vu1995accurate} proposed introducing the effective force $\bF_B$.  This force has the properties:
\begin{equation} \label{eq:BFV_force_limits}
\begin{split}
	\bF_B \approx -\mu \nabla B &\quad \text{for} \quad \Oc \Dt \gg 1, \\
	\bF_B = O\left(\Dt^2\right) &\quad \text{for} \quad \Oc \Dt \ll 1.
\end{split}
\end{equation}
Here, $\mu = mu^2 / 2B$ is the particle's magnetic moment.  As a result, this force is effective in capturing the missing drift motions when $\Oc \Dt \gg 1$ and preserves the second-order convergence of the scheme to full orbit dynamics as $\Dt \rightarrow 0$.  The drawback is that $\bF_B$ can do work on the particle, since there is no guarantee that $\bv\nph \cdot \bF_B = 0$.  This can lead to $O(1)$ changes in energy that have significant deleterious effects on the accuracy of the computed orbits; see the numerical examples in \cite{ricketson2019energy}.  

The scheme \eqref{eq:CBFVdef} improves upon BFV by enforcing energy conservation, $\bv\nph \cdot \bF_{eff} = 0$, in addition to capturing all first-order drifts for $\Oc \Dt \gg 1$.  Energy conservation is achieved for arbitrary $\bG$ by the third line of \eqref{eq:CBFVdef}.  It can then be shown that the form of $\bG$ given is the unique expression satisfying both
\begin{equation} \label{eq:FeffConstraints}
\begin{split}
	\left\langle \bF_{eff} \right\rangle &= \bF_B, \\
	\left\langle \bG \right\rangle &= \bG,
\end{split}
\end{equation}
where $\langle \cdot \rangle$ denotes the gyro-average operator.  The first line above ensures the AP properties we desire, while the second merely makes the algebraic derivation tractable.  The expression for $\bG$ is valid under the assumption that the particle motion is dominated jointly by gyration and the $\bE \times \bB$ drift.  It is assumed that if this is not the case, the gyro-period must be resolved, meaning that $\bF_{eff} = O\left(\Dt^2\right)$ is negligible in any case.  

A key observation that will be leveraged in our subsequent analysis is that $\bF_{eff}$ cannot be made to approximate $-\mu \nabla B$ pointwise in time while simultaneously conserving energy.  This is clear in the case when gyration dominates the particle motion: $\bv\nph$ rotates rapidly at the gyrofrequency, while $\mu \nabla B$ varies over much slower time-scales, so one cannot enforce $\bv\nph \cdot \mu \nabla B = 0$ at all times.  This is what forces us to settle in \eqref{eq:FeffConstraints} for imposing constraints on the \textit{gyro-average} of $\bF_{eff}$ rather than $\bF_{eff}$ itself.  The scheme relies on a sequence of consecutive time-steps to sample the particle at various gyrophases to accumulate an approximate gyro-average of $\bF_{eff}$.  This idea motivates the use of alternating large and small time-steps presented below, as this can force the particle to traverse a prescribed gyrophase at each time-step.  

Clearly, the reliance on time-stepping to compute gyro-averages introduces a new numerical time-scale that must be resolved in addition to physical scales.  Namely, the time it takes to accumulate an accurate gyro-average should be small compared to time-scales of interest.  Additionally, on shorter time-scales the particle may drift due to the fact that $\bF_{eff}$ deviates from its desired gyro-average.  One should ensure that the distance the particle drifts due to this unphysical effect is small compared to length-scales of interest.  These constraints were analyzed quantitatively in \cite{ricketson2019energy} to arrive at upper bounds on $\Oc \Dt$ based on physical parameters in two asymptotic limits: $u\nph \gg v_E$ and $u\nph \ll v_E$.  These results are summarized in Table \ref{table:dt_table}.  Therein, $\tau_\textrm{res}$ is a shorthand for the fastest time-scale one wishes to resolve, and 
\begin{equation} \label{eq:deltadefs}
	\delta_\perp = \rho \frac{\left\| (\nabla B)_\perp \right\|}{B}, \qquad \delta_\parallel = \frac{v_\parallel}{\Oc} \frac{\left\| (\nabla B)_\parallel \right\|}{B}, \qquad \delta_E = \frac{v_E}{\Oc} \frac{\left\| (\nabla B)_\perp \right\|}{B}.
\end{equation}
These three non-dimensional parameters have simple interpretations: $\delta_\perp$ is the fractional change in $B$ over a gyro-orbit; $\delta_\parallel$ is the fractional change in $B$ over a gyration time due to parallel motion; $\delta_E$ is the fractional change in $B$ over a gyration time due to $\bE \times \bB$ motion.  As before, the expressions in Table \ref{table:dt_table} are valid under the assumption that the particle's velocity is jointly dominated by gyration and $\bE \times \bB$ motion.  

\begin{table}[t]
\centering
\vspace{1em}
\caption{The maximum allowable value of $\Oc \Dt$ to ensure (a) the anomalous displacement does not exceed gyroradius and (b) the time-scale over which accurate implicit gyroaveraging is achieved is shorter than $\tau_{\textrm{res}}$.  Expressions are valid in the asymptotic limits (i) $u\nph \gg v_E$ and (ii) $u\nph \ll v_E$, respectively.} \label{table:dt_table}
\begin{tabular}{c | c c}
 $\max \left( \Oc \Dt \right)$ & (a) Displacement restriction & (b) Avg. restriction \\
 \hline
 (i) $u^{n+1/2} \gg v_E$ & $2 \min \left\{ \sqrt{2} (\delta_\perp)^{-1/2},  (\delta_\parallel)^{-1/2} \right\}$ & $2 \left( \frac{\Omega_c \tau_{\textrm{res}}}{\pi} \right)^{1/2}$ \\
 (ii) $u^{n+1/2} \ll v_E$ & $\sqrt{2} \left( \delta_E + \delta_\parallel \right)^{-1/2}$ & $\Omega_c \tau_{\textrm{res}}$
\end{tabular}
\vspace{1em}
\end{table}

\subsection{Implicit PIC: Energy and local charge conservation}
The scheme described above was designed to function as a piece of the implicit PIC schemes developed over the last decade \cite{chacon2013charge, chen2014energy, chen2015multi, chen2011energy, chen2012efficient, chen2014fluid, ricketson2023pseudo, taitano2013development}.  The essence of these schemes can be stated compactly in the simple collisionless, electrostatic case:
\begin{equation} \label{eq:implicitPIC}
\begin{split}
	\bx^{n+1}_p &= \bx^n_p + \bv_p\nph, \\
	\bv^{n+1}_p &= \bv^n_p + \frac{q_p}{m_p} \left[ \bE\nph_p + \bv\nph_p \times \bB\nph_p \right], \\
	\bE_p\nph &= \sum_i \bE\nph_i S \left( \bx_p\nph - \bx_i \right), \\
	\bE_i^{n} &= -\nabla_h \phi_i^n, \\
	\nabla_h^2 \phi_i^{n+1} &= \nabla_h^2 \phi_i^n + \frac{\Dt}{\epsilon_0} \nabla_h \cdot \bj^{n+1/2}_i, \\
	\bj_i^{n+1/2} &= \frac{1}{ \lvert \mathbf{h} \rvert } \sum_p \bv_p\nph S \left( \bx_i - \bx_p\nph \right).
\end{split}
\end{equation}
Here, the subscript $p$ indicates a particle quantity while $i$ indicates a grid quantity.  $S$ is the so-called ``shape function" that interpolates between the grid and particles, and $\nabla_h$, $\nabla_h \cdot$, $\nabla^2_h$ are discrete gradient, divergence, and Laplacian operators, respectively.  $\bB$ is assumed to be a known function that may be evaluated at particle locations at will, $\lvert \mathbf{h} \rvert$ denotes the cell volume, and $\epsilon_0$ is the permittivity of free space.  Note that the first two lines are simply the Crank-Nicolson time integrator for particles, while the fifth line is a discretization of the divergence of Ampere's equation in the electrostatic limit.  

This implicit discretization of the Vlasov-Ampere system has been carefully constructed to ensure exact energy conservation for any spatial resolution $\mathbf{h}$ and time-step $\Delta t$.  Stated precisely, one has:
\begin{equation} \label{eq:EconsStatement}
	\left[ \sum_p \frac{1}{2} m_p v_p^2 + \frac{\epsilon_0}{2} \lvert \mathbf{h} \rvert \sum_i E_i^2 \right]^{n+1} = \left[ \sum_p \frac{1}{2} m_p v_p^2 + \frac{\epsilon_0}{2} \lvert \mathbf{h} \rvert \sum_i E_i^2 \right]^{n}.
\end{equation}
The proof of this fact in several different contexts is well-described in the citations above, but we re-print it here because following its logic instructs later modifications arising from our new time integrator. The proof reads: 
\begin{equation} \label{eq:EconsProof}
\begin{split}
	\frac{1}{2} \sum_p m_p \left( \left\| \bv_p^{n+1} \right\|^2 - \left\| \bv_p^n \right\|^2 \right) &= \sum_p m_p \bv_p^{n+1/2} \cdot \left( \bv_p^{n+1} - \bv_p^n \right) \\
	\textrm{CN definition} \rightarrow \quad &= \Delta t \sum_p q_p \bv_p^{n+1/2} \cdot \left( \bE\nph_p + \bv_p^{n+1/2} \times \bB_p\nph \right) \\
	\textrm{cross prod. + def. of } \bE_p \rightarrow \quad &= \Delta t \sum_p \sum_i q_p \bv_p^{n+1/2} \cdot \bE^{n+1/2}_i S \left(\bx_p^{n+1/2} - \bx_i \right) \\
	\textrm{swap sums + } \bj \textrm{ def.} \rightarrow \quad &= \Delta t | \mathbf{h} | \sum_i \bE^{n+1/2}_i \cdot \bj_i^{n+1/2} \\
	\bE = -\nabla_h \phi \rightarrow \quad &= -\Delta t | \mathbf{h} | \sum_i \bj^{n+1/2}_i \cdot \nabla_h \phi_i^{n+1/2} \\
	\textrm{integrate by parts} \rightarrow \quad &= \Delta t | \mathbf{h} | \sum_i \left( \nabla_h \cdot \bj^{n+1/2}_i \right) \phi_i^{n+1/2} \\
	\textrm{def. of Ampere update} \rightarrow \quad &= \epsilon_0 | \mathbf{h} | \sum_i \left( \nabla^2_h \phi_i^{n+1} - \nabla^2_h \phi_i^n \right) \phi_i^{n+1/2} \\
	\textrm{integrate by parts} \rightarrow \quad &= -\epsilon_0 | \mathbf{h} | \sum_i \left( \nabla_h \phi_i^{n+1} - \nabla_h \phi_i^n \right) \cdot \nabla_h \phi_i^{n+1/2} \\
	&= -\frac{\epsilon_0}{2} | \mathbf{h} | \sum_i \left\{ \left\| \bE_i^{n+1} \right\|^2 - \left\| \bE_i^n \right\|^2 \right\}.
\end{split}
\end{equation}
Simple rearrangement of the first and last expressions yields \eqref{eq:EconsStatement}.  Note that we have assumed the discrete differentiation operators preserve the integration-by-parts properties enjoyed by their continuum analogues.  In practice, some care is needed in choosing $\nabla_h$, $\nabla_h \cdot$, and $\nabla^2_h$ so that this is true.  However, as the subject of this article is the particle advance, we leave those details to the references above.  

Note that in moving from the second line to the third in \eqref{eq:EconsProof}, any effective force $\bF_{eff}$ that is added to Crank-Nicolson satisfying $\bv_p\nph \cdot \bF_{eff} = 0$ vanishes.  Thus, the previously developed AP time-integrator \eqref{eq:CBFVdef} may be substituted into implicit PIC for Crank-Nicolson while preserving discrete energy conservation.  Similarly, we will show that the modifications introduced here will preserve this desirable property when benign changes to relevant definitions are made.  

A second important conservation property that should be respected is local charge conservation.  That is, a discrete continuity equation should be satisfied.  With a second-order, centered finite difference spatial discretization, this takes the form
\begin{equation} \label{eq:discrete_continuity}
	\rho_{i+1/2}^{n+1} = \rho_{i+1/2}^n - \Dt \nabla_h \cdot \bj_i\nph,
\end{equation}
where $\rho$ is a charge density.  The combination of Ampere's law and this continuity equation enforces Gauss' law.  Thus, enforcement of continuity allows the discrete Vlasov-Ampere system to avoid the need for so-called ``cleaning" of the field at each time-step to enforce Gauss' law.  Instead, Gauss's law is automatically satisfied at every time-step.  

There are multiple methods in the literature for enforcing this discrete continuity equation \cite{chen2011energy, chen2019semi, esirkepov2001exact}, but the vast majority are based on the observation that \textit{within a single cell} it is often possible to find a shape function $\tilde{S}$ (typically different from the $S$ appearing above, and in fact it may be necessary for $S$ to be different for different components of $\mathbf{j}$) such that:
\begin{equation} \label{eq:charge_detailed_balance}
	\tilde{S}(\mathbf{x}_{i+1/2} - \mathbf{x}_p^{n+1}) - \tilde{S}(\mathbf{x}_{i+1/2} - \mathbf{x}_p^n) + \nabla_h \cdot \left[ (\mathbf{x}_p^{n+1} - \mathbf{x}_p^n) S(\mathbf{x}_i - \mathbf{x}_p^{n+1/2}) \right] = 0.
\end{equation}
This identity often requires leveraging the fact that a centered finite difference is exact for quadratic functions, which places constraints on the choices of $S$ and $\tilde{S}$ that can be made.  Recalling that $\mathbf{x}_p^{n+1} - \mathbf{x}_p^n = \Delta t \mathbf{v}_p^{n+1/2}$, dividing through by $| \mathbf{h} |$ and summing over all particles, one observes that this is sufficient to guarantee charge conservation with the appropriate definition of $\rho$:
\begin{equation}
	\rho^n_{i+1/2} = \frac{1}{|\mathbf{h}|} \sum_p \tilde{S}( \mathbf{x}_{i+1/2} - \mathbf{x}_p^n).
\end{equation}

Of course, one still must reckon with cell crossings, across which shape functions are non-smooth and \eqref{eq:charge_detailed_balance} is generally violated.  One typically deals with this by subdividing particle trajectories within a time-step into sub-steps that terminate on cell boundaries, and performing current deposition and field interpolation directly on these sub-steps.  See the references above for more detail, including demonstrations that the energy conservation derivation above carries through without meaningful changes due to the sub-steps.


\section{Finite Larmor Radius Effects and Time-Step Optimization}
As discussed in the preceding section, the time-integration scheme \eqref{eq:CBFVdef} accurately reproduces the particle gyroradius and all guiding center drifts up to first order for $\Oc \Dt \gg 1$.  It does this while still converging to the full-orbit dynamics at second order as $\Dt \rightarrow 0$ and conserving energy.  While this is a promising step, the guiding center drifts in question are derived under the assumption of infinitesimal gyroradius.  That is, it is assumed that $\rho / L \ll 1$, where $L$ is the smallest length scale of interest.  This is an assumption of the \textit{drift}-kinetic limit, but not of the \textit{gyro}-kinetic limit.  

To extend to the gyrokinetic limit, we must allow $\bE$ to vary on spatial scales comparable to $\rho$ in the perpendicular direction.  This spatial scale length is often written in terms of a perpendicular wave number $k_\perp$.  In this notation, the case we are concerned with is $k_\perp \rho \sim 1$.  While derivations of gyrokinetic Vlasov systems can be quite complex, the modification to single particle drift motion to handle this case is relatively simple: wherever $\bE$ appears, it should be replaced by its gyro-average $\langle \bE \rangle$.  That is, it is no longer sufficient to evaluate $\bE$ at the gyrocenter; it must be averaged along the particle's gyro-orbit.  

In previous work, we relied on a sequence of consecutive time-steps to compute a gyro-average.  In particular, we used the fact that
\begin{equation}
	\frac{1}{N}\sum_{n=1}^N \bF_{eff}\nph \approx \left\langle \bF_{eff} \right\rangle.
\end{equation}
A potential strategy, then, to evaluate $\langle \bE \rangle$ is to again rely on a sequence of consecutive time-steps.  The challenge would then be to arrange the scheme so that 
\begin{equation} \label{eq:gavgE}
	\frac{1}{N}\sum_{n=1}^N \bE \nph \approx \left\langle \bE \right\rangle.
\end{equation}
This is considerably more difficult than what was previously done with $\bF_{eff}$.  There, no field contained structure on the scale of the gyro-radius, so it sufficed to allow the time-stepping to sample an uncontrolled sequence of gyrophases.  When $\bE$ varies on scales comparable to $\rho$, the gyro-average must be performed with considerably more care if accuracy is to be expected.  Moreover, the electric field may have structure in the parallel direction, so sampling distinct gyrophases at distinct times in scenarios that feature non-trivial parallel particle motion can lead to further difficulties.  

As a result of these considerations, we instead propose to \textit{explicitly} compute a gyro-average of $\mathbf{E}$ at each time-step and to use this gyro-average to push particles rather than the pointwise value of $\mathbf{E}$.  However, care is needed to ensure that exact energy conservation is preserved, as well as convergence to full-orbit motion.  We show how this is accomplished in the following two subsections.  Afterward, we describe our strategy for alternating large and small time-steps, which results in a more accurate approximate gyroaverage of $\mathbf{F}_{eff}$ after a smaller number of time-steps $N$.  This results in relaxed time-step constraints, which we derive, in addition to showing that the numerical gyroradius is unaffected by radical changes in the sizes of consecutive time-steps.  



Armed with new, relaxed time-step constraints, we present an adaptive time-stepping procedure.  This also requires reckoning with FLR effects.  Next, we summarize the scheme and its justification in complex-field configurations.  Potential strategies for maintaining exact local charge conservation in a PIC context are discussed along with energy conservation, but their full development and implementation is left to future work.    


\subsection{Electric Field Evaluation}
Traditionally, in implicit PIC schemes, the electric field used to push particles $\bE_p\nph$ is evaluated at the midpoint of the time-step in both space in time.  For single-particle motion with an analytically prescribed electric field, this is written mathematically as:
\begin{equation} \label{eq:midpteval}
	\bE\nph = \bE \left( \bx\nph, t\nph \right).
\end{equation}
In the context of a PIC scheme with $\bE$ defined on a spatial grid, this is written as:
\begin{equation}
    \label{eq:e-fld-scatter}
	\bE_p\nph = \sum_i \bE_i\nph S \left(\bx_i - \bx_p\nph\right),
\end{equation}
where $S$ is the particle shape function, $\bx_i$ the grid points, and $\bE_i\nph = (\bE_i^n + \bE_i^{n+1})/2$ is the half-step field at the grid points.  

The desire to capture FLR effects necessitates a change to this midpoint evaluation.  The simplest way to see this is to note that for $\Oc \Dt \gg 1$ the particle roughly traverses a diameter of the gyro-orbit during a time-step.  Thus, $\bx_p\nph$ is approximately the gyro-center position, rather than residing anywhere on the gyro-orbit.  An average over consecutive time-steps will thus not approximate the gyro-average, but rather be biased in favor of the field at the gyro-center.

We wish instead to approximate the gyroaverage of $\mathbf{E}$ at the half-step.  To do this, one first needs to use the current particle state to estimate the gyro-center and gyro-radius, $\mathbf{x}_c$ and $\rho$, respectively.  We again assume that the particle's motion is jointly dominated by gyration and the $\mathbf{E} \times \mathbf{B}$ drift, so to a good approximation:
\begin{equation}
	\mathbf{u} = \mathbf{v}_\perp - \mathbf{v}_E, \qquad \mathbf{x}_c = \mathbf{x} + \Omega_c^{-1} \mathbf{u} \times \mathbf{b}, \qquad \rho = \frac{u}{\Omega_c}
\end{equation}
at any given instant.  

As the rest of our scheme is time-centered, we choose time-centered approximations of $\mathbf{x}_c$ and $\rho$. %
Define an estimate of the gyrocenter location and gyroradius at integer times $t^n$ by
\begin{equation}
	\mathbf{x}_c^n = \mathbf{x}^n + \frac{1}{\Omega_c^n} \mathbf{u}^n \times \mathbf{b}^n, \qquad \rho^n = \frac{u^n}{\Omega_c^n}
\end{equation}
and at half-steps by
\begin{equation} \label{eq:gcdefs}
	\mathbf{x}_c^{n+1/2} = \frac{1}{2} \left( \mathbf{x}_c^n + \mathbf{x}_c^{n+1} \right), \qquad \rho^{n+1/2} = \frac{1}{2} \left( \rho^n + \rho^{n+1} \right).
\end{equation}
We then define the estimated gyro-averaged $\bE$-field as:
\begin{equation} \label{eq:gavgE}
	\left\langle \bE \right\rangle\nph = \frac{1}{n_g} \sum_{k=1}^{n_g} \bE \left( \bx_c^{n+1/2} + \rho^{n+1/2} \mathbf{n}_k, t^{n+1/2} \right),
\end{equation}
where the $\mathbf{n}_k$ are unit vectors such that $\mathbf{n}_k \cdot \bb\nph = 0$ and $\mathbf{n}_{k+1} = \mathcal{R}_{2\pi/n_g} \mathbf{n}_k$.  Here, $\mathcal{R}_\theta$ is the rotation operator about $\bb$ by angle $\theta$.  Clearly, $n_g$ is the number of gyrophase samples used.  Additionally, it is clear from Eq. \eqref{eq:e-fld-scatter} how to generalize this statement to an electric field defined on a mesh in a self-consistent PIC scheme.  

It is immediately evident that this evaluation approximates the gyro-average of the electric field.  However, it is less clear how to guarantee that this modification preserves second-order convergence to the full-orbit motion in the small time-step limit.  We achieve this in two stages.  First, we adaptively choose $n_g$ such that it tends to one as $\Delta t \rightarrow 0$.  Second, we choose the first unit vector $\mathbf{n}_1$ intelligently so that $\mathbf{x}_c\nph + \rho\nph \mathbf{n}_1 \approx \bx\nph$. 

The first step is achieved by choosing $n_g$ adaptively according to:
\begin{equation} \label{eq:adaptive_ng}
	n_g = \left\lceil \min \left\{ a (k_\perp \rho)^c, b \Omega_c \Dt \right\} \right\rceil,
\end{equation}
with $\lceil ~ \rceil$ the ceiling function, and $a$, $b$ and $c$ free parameters to be determined later. In this way, if we are either taking small time-steps or the spatial structure is on scales much larger than the gyroradius, we choose $n_g = 1$.  We use increasing values of $n_g$ as the time-step gets larger and the spatial scales become more comparable to $\rho$.  We discuss the specification of $a$, $b$, and $c$ in more detail in the adaptive time-stepping and numerical results sections, as well as the estimation of $k_\perp$ from local field data.  

It just remains to choose the proper initial unit normal vector $\mathbf{n}_1$ such that in the limit of small $\Delta t$ and $n_g = 1$, we have:
\begin{equation}
	\bx_c\nph + \rho\nph \mathbf{n}_1 = \bx\nph + O(\Delta t^2).
\end{equation}
This will ensure that the scheme is still convergent to the full-orbit solution at second order as $\Delta t \rightarrow 0$. The last equation suggests that $\mathbf{n}_1 \propto \bx_c\nph-\bx\nph$, and therefore  we choose:
\begin{equation}
	\mathbf{n}_1 = \frac{\bx^{n+1/2} - \bx_c^{n+1/2}}{\left\| \bx^{n+1/2} - \bx_c^{n+1/2} \right\|}  = -\frac{ \bu^{n+1/2} \times \bb^{n+1/2} }{ u^{n+1/2} }, 
\end{equation}
which manifestly has unit magnitude and is orthogonal to $\bb$, as required.  Moreover, elementary algebra and the definitions in \eqref{eq:gcdefs} reveal that
\begin{equation} \label{eq:gcformula}
	\bx_c^{n+1/2} + \rho^{n+1/2} \mathbf{n}_1 = \bx^{n+1/2} + \left( \bu^{n+1/2} \times \bb^{n+1/2} \right) \Omega_c^{-1} \left( 1 - \frac{u^{n+1} + u^n}{2 u^{n+1/2} }\right).
\end{equation}
Note that, when considering the small time-step limit, we may ignore the effect of $\mathbf{F}_{eff}$ when seeking second-order convergence, as it is second-order in this limit by construction -- see \eqref{eq:BFV_force_limits}-\eqref{eq:FeffConstraints}, and \cite{ricketson2019energy} for more details.  Thus, up to second order, the update for $\bu$ is simply Crank-Nicolson, which satisfies $\bu^{n+1} = \mathcal{R}_\theta \bu^n$, with the rotation angle $\theta$ given by \cite{ricketson2019energy}
\begin{equation} \label{eq:rotangle}
	\cos \theta = \frac{1 - \Oc^2 \Dt^2 / 4}{1 + \Oc^2 \Dt^2 / 4}.
\end{equation}
Thus, 
\begin{equation}
	\left( \frac{u^{n+1} + u^n}{2 u^{n+1/2}} \right)^2 = \left( u^n \right)^2 \frac{4}{\left\| R_\theta \bu^n + \bu^n \right\|^2} = \frac{2}{1 + \cos \theta} = 1 + \Omega_c^2 \Dt^2/4.
\end{equation}
As a result, 
\begin{equation}
	1 - \frac{u^{n+1} + u^n}{2 u^{n+1/2}} = 1 - \sqrt{1 + \Omega_c^2 \Dt^2 / 4} = -\Omega_c^2 \Dt^2/8 + O(\Delta t^4).
\end{equation}
This, along with \eqref{eq:gcformula}, confirms that indeed $\bx\nph$ and $\bx_c\nph + \rho\nph \mathbf{n}_1$ differ by $O(\Delta t^2)$ in the small step limit, so second-order accuracy of the overall scheme is preserved as a consequence of the second-order accuracy of Crank-Nicolson itself.  


\subsection{Energy and Charge Conservation} \label{sec:Econs}

Considerable care has been taken in developing implicit PIC methods of the type \eqref{eq:implicitPIC} to ensure that they feature exact energy and local charge conservation.  It may initially appear that the redefinition of $\bE_p\nph$ proposed here [see \eqref{eq:gavgE}] breaks these properties.  However, we show next that, with analogous changes to the current and charge deposition processes, energy and charge conservation can be recovered.  For the moment, our proof of charge conservation is limited to the case in which $n_g$ is independent of the time-step, but future work will seek to generalize this.


To recover energy conservation, we make the following modification to the current deposition definition in \eqref{eq:implicitPIC}:
\begin{equation}
    \label{eq:gk-current}
	\mathbf{j}_i\nph = \frac{1}{|\mathbf{h}|} \sum_p \bv_p\nph \frac{1}{n_g} \sum_{k=1}^{n_g} S \left( \bx_c\nph + \rho\nph \mathbf{n}_k - \bx_i \right).
\end{equation}
Along with the definition of $\bE_p\nph$ in \eqref{eq:expgavgE}, the logic in \eqref{eq:EconsProof} carries through unchanged to recover exact energy conservation.  Note that this definition of current density makes particular physical sense in the gyrokinetic context, as it distributes current along the gyro-orbit.  Moreover, in the large time-step limit, $\bv_p\nph$ is dominated not by the particle gyration velocity [at the half-step the gyration velocity vanishes in the large step limit; see \cite{ricketson2019energy} and \eqref{eq:halfstepu}], but by drift velocities.  Thus, to leading order for large time-steps, Eq. \eqref{eq:gk-current} approximates the current density due to drift motion distributed over the gyro-orbit, as physically desired.

Turning to charge conservation, we recall that what is required is adherence to some discrete continuity equation along the lines of \eqref{eq:discrete_continuity}, and that this is typically achieved by finding a shape function satisfying \eqref{eq:charge_detailed_balance} within a cell and decomposing particle displacements into substeps that lie within a cell.  We go into more detail here on strategies for division into substeps.  In previous implicit PIC schemes, one either directly chooses time-step sizes so that particles stop on cell boundaries \cite{chen2011energy} or one chooses time-steps freely and interprets particle trajectories (for the purposes of current deposition and $\bE$ evaluation) within a time-step as composed of segments, each of which lies within a cell \cite{chen2019semi, esirkepov2001exact, umeda2003new, villasenor1992rigorous}. 

The former strategy will prove clearly incompatible with the method of alternating large and small time-steps presented below. Indeed, we will require detailed control over time-step size in order to maximize efficiency.  We will thus focus the remainder of our discussion on the latter strategy.  Note that if we have found an $\tilde{S}$ such that \eqref{eq:charge_detailed_balance} holds for any pair of points $\mathbf{x}_p^{n+1}$, $\bx_p^n$ which lie in a single cell, then the same relation holds equally well for the pair of points $\mathbf{x}_{c}^{n+1} + \rho^{n+1} \mathbf{n}_k$, $\mathbf{x}_{c}^{n} + \rho^n \mathbf{n}_k$.  Also note that, critically, the point at which we evaluate the current's shape function is the mean of these two points.  Summing the resulting version of \eqref{eq:charge_detailed_balance} from $k=1$ to $n_g$ and dividing by $n_g$ as well as $| \mathbf{h} |$, one finds that the discrete continuity equation is indeed satisfied with the definition
\begin{equation}
	\rho_{i+1/2}^n = \frac{1}{|\mathbf{h}|} \sum_p \frac{1}{n_g} \sum_{k=1}^{n_g} \tilde{S}(\mathbf{x}_c^n + \rho^n \mathbf{n}_k - \mathbf{x}_{i+1/2}).
\end{equation}

However, the logic above implicitly assumes common definitions of $n_g$ and $\mathbf{n}_k$ at time-step $n$ and $n+1$.  Our proposal to adaptively choose $n_g$ based on local field conditions is inconsistent with this assumption.  The topic of exact charge conservation with adaptive gyro-averaging thus represents a challenge for future work in applying this scheme to self-consistent PIC simulations.

\subsection{Alternating Time-steps} \label{sec:altdt}
In developing the FLR-capturing scheme above, the authors attempted several other solutions that, while ultimately unsatisfactory for realistic field configurations, led to a useful technique for relaxing time-step restrictions on the scheme that we describe here.  We begin in a simplified scenario with fixed, spatially uniform $\bE$ and $\bB$.  We will argue later that our conclusions extend to more general field configurations.  We recall from the analysis in \cite{ricketson2019energy} -- which borrows from ideas in \cite{parker1991numerical} -- that the perpedicular velocity update for Crank-Nicolson (and thus for \eqref{eq:CBFVdef} in uniform $\bB$) may be written
\begin{equation}
	\bv_\perp^{n+1} = \bv_E + \mathcal{R}_\theta \left[ \bv_\perp^n - \bv_E \right],
\end{equation}
where $\bv_E$ and $\mathcal{R}_\theta$ are as defined above, and the rotation angle $\theta$ is again defined by \eqref{eq:rotangle}.

In this simple circumstance, the gyration velocity $\bu$ is exactly $\bv_\perp - \bv_E$, so this expression may be shortened to
\begin{equation}
	\bu^{n+1} = \mathcal{R}_\theta \bu^n.
\end{equation}
We now propose to alternate between a large time-step $\Dt$ and a smaller one $\dt$.  Let these time-steps give rise to gyration velocity rotation angles $\theta$ and $\delta \theta$, respectively.  Without loss of generality, let $t^{n+1} - t_n = \Dt$ and $t^{n+2} - t^{n+1} = \dt$.  It is straightforward to see, geometrically, that the angle between $\bu\nph$ and $\bu^{n+3/2}$ is $(\theta + \delta \theta)/2$.  This observation is independent of the ordering of the large and small steps, so we see that the half-step gyration velocity rotates by $(\theta + \delta \theta)/2$ at \textit{every} time-step.  

It is this simple observation that motivates our alternating-step approach.  Note that as a consequence of \eqref{eq:rotangle}, for $\Oc \Dt \gg 1$ we have $\theta \approx \pi$.  When stepping over the gyration time-scale by a large factor, a particle thus oscillates between two nearly diametrically opposed gyrophases at each time-step, meaning it only very slowly accumulates a representative sample of phases from the entire orbit.  Noting, however, that the half-step velocity is that which actually updates the particle's position [see \eqref{eq:CBFVdef}] we see that alternating time-steps allows us some control over the displacement rotation angle, and thus over which gyrophases are sampled.  While the scheme now explicitly computes a gyro-average of $\mathbf{E}$, it still relies on a sequence of consecutive time-steps to approximate a gyro-average of $\mathbf{F}_{eff}$.  If we can accumulate a representative gyrophase sampling more quickly by controlling gyrophase traversal, one expects this to allow us to take larger time-steps.  

We describe how to exercise this control to achieve arbitrary numbers of equispaced gyrosamples in what follows.  First, though, it is important to confirm that this new alternating-step strategy does not break the desirable properties enjoyed by \eqref{eq:CBFVdef}.  To that end, recall that the geometric derivations in \cite{birdsall2004plasma, parker1991numerical, ricketson2019energy} of the numerical gyroradius of the Boris and Crank-Nicolson integrators assume a fixed time-step.  Indeed, they assume the particle displacement vector has fixed magnitude at each step.  This assumption is clearly violated by the proposed alternating time-step scheme, so it is not obvious that Crank-Nicolson (and thus the AP integrator \eqref{eq:CBFVdef} above) still recovers the correct gyroradius when alternating between large and small time-steps.  

Admittedly, it would be surprising if the numerical gyroradius deviated significantly from the physically correct gyroradius as a result of varying time-step size.  Nevertheless, it is a valuable exercise to show that, in fact, the \textit{exact} correct gyroradius is still recovered for an \textit{arbitrary} sequence of time-steps.  This is shown in Lemma 1 below, and the tools used in this proof will be instrumental in deriving new time-step restrictions on the scheme.  

\begin{lemma}
	Let the particle position and velocity be updated according to 
	\begin{equation} \label{eq:gmotiononly}
	\begin{split}
		\bx^{n+1} &= \bx^n + \Dt_n \bu\nph, \\
		\bu^{n+1} &= \bu^n + \frac{q}{m} \Dt_n \bu\nph \times \bB,
	\end{split}
	\end{equation}
	with $\bB$ constant in space and time, and for an arbitrary sequence of time-steps $\Dt_n$. Further, let $\bu^0 \cdot \bB = 0$.  Then \textit{all} particle locations $\bx^n$ lie on a single circle of radius $\rho = u/\Oc$, where $u$ is the common magnitude of all the vectors $\bu^n$.  
\end{lemma}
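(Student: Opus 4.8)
The plan is to exhibit a discrete gyro-center that is exactly invariant under \eqref{eq:gmotiononly}, and then to observe that every $\bx^n$ sits at a fixed distance from it within a common plane. I would first establish two preliminary invariants of the velocity recursion. Dotting the update for $\bu$ with $\bu\nph = (\bu^n + \bu^{n+1})/2$ annihilates the right-hand side, since $\bu\nph \times \bB$ is orthogonal to $\bu\nph$; as $\bu\nph \cdot (\bu^{n+1} - \bu^n) = \tfrac{1}{2}(\|\bu^{n+1}\|^2 - \|\bu^n\|^2)$, this yields $\|\bu^{n+1}\| = \|\bu^n\|$, so all the $\bu^n$ share a common magnitude $u$. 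Dotting the same update with $\bB$ likewise annihilates the right-hand side (now because $\bu\nph \times \bB \perp \bB$), so $\bu^n \cdot \bB$ is conserved; together with the hypothesis $\bu^0 \cdot \bB = 0$ this gives $\bu^n \cdot \bB = 0$ for all $n$, i.e.\ the motion remains in the plane perpendicular to $\bb$.

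The heart of the argument is to define the discrete gyro-center $\bx_c^n = \bx^n + \Oc^{-1} \bu^n \times \bb$ and to show it does not move. Forming $\bx_c^{n+1} - \bx_c^n$, the position update contributes $\Dt_n \bu\nph$ while the velocity change contributes $\Oc^{-1}(\bu^{n+1} - \bu^n)\times \bb$. Using $\tfrac{q}{m}\bB = \Oc \bb$ to rewrite $\bu^{n+1} - \bu^n = \Dt_n \Oc\, (\bu\nph \times \bb)$, and invoking the vector triple-product identity together with $\bu\nph \cdot \bb = 0$ (from the second invariant), the factor $(\bu\nph \times \bb)\times \bb$ collapses to $-\bu\nph$. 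The two contributions thus reduce to $\Dt_n \bu\nph$ and $-\Dt_n \bu\nph$, which cancel. The cancellation hinges on both contributions carrying the \emph{same} prefactor $\Dt_n$, so it is completely insensitive to the step size; this is precisely why an arbitrary sequence of steps leaves the center fixed. Hence $\bx_c^n \equiv \bx_c$ is a single point.

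It then remains only to read off the geometry. From $\bx^n - \bx_c = -\Oc^{-1} \bu^n \times \bb$, the perpendicularity $\bu^n \perp \bb$ and $\|\bb\| = 1$ give $\|\bx^n - \bx_c\| = \|\bu^n\|/\Oc = u/\Oc = \rho$ for every $n$, while each $\bu^n \times \bb$ lies in the plane orthogonal to $\bb$. The points $\bx^n$ are therefore equidistant from the common center $\bx_c$ and lie in the single fixed plane through $\bx_c$ normal to $\bb$, so they all lie on one circle of radius $\rho$, as claimed.

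I expect the only delicate step to be the invariance of $\bx_c^n$. It requires the perpendicularity invariant $\bu\nph \cdot \bb = 0$ for the triple product to simplify, and the key observation that the matching $\Dt_n$ prefactors of the two terms produce an exact, step-size-independent cancellation; the common-magnitude invariant then supplies the uniform radius at the end. Everything else is routine vector bookkeeping.
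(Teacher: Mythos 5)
Your proof is correct, and it takes a genuinely different and substantially more economical route than the paper's. The paper proves the result geometrically: it shows via the law of sines and law of cosines that any three consecutive positions lie on a circle of radius $\rho$, then uses an induction over overlapping four-point sets together with a limiting-plus-continuity argument to rule out the ``mirror'' circle of the same radius through two shared points. You instead exhibit the exact discrete invariant $\bx_c^n = \bx^n + \Oc^{-1}\bu^n\times\bb$, whose conservation follows in one line from the matching $\Dt_n$ prefactors and the triple-product identity (using the planarity invariant $\bu^n\cdot\bB=0$, which you correctly establish first). This is cleaner, avoids the delicate mirror-circle exclusion entirely, and in fact proves something stronger than the lemma: since your invariant center coincides with the \emph{true} initial gyrocenter $\bx^0+\Oc^{-1}\bu^0\times\bb$, your argument shows the entire gyro-orbit (center and radius) is recovered exactly for \emph{any} sequence of time-steps, which subsumes Corollary 1 without needing either of its limiting hypotheses. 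What the paper's longer route buys in exchange is a set of working tools --- the displacement magnitudes $\Delta x^n = 2\rho\sin(\theta_n/2)$, the half-step rotation angle, and the law-of-sines radius formula --- that are reused verbatim in Appendix C to derive the time-step restrictions, so the geometric machinery is not wasted in context even though it is overkill for the lemma itself.
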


\begin{proof}
	See Appendix A.
\end{proof}

It is also important to understand the relationship of the numerical gyrocenter to the analytic gyrocenter.  Indeed, simply having the correct gyro\textrm{radius} is insufficient to guarantee the correct gyro-\textrm{orbit}.  To that end, we note a surprising consequence of Lemma 1:  Under the evolution \eqref{eq:gmotiononly}, the first three particle locations $\{ \bx^0, \bx^1, \bx^2 \}$ define a unique circle.  According to Lemma 1, this circle has radius $\rho$ and \textit{all} other future particle locations lie on it.  Thus, in this simple scenario, the gyro-orbit -- including the gyro-center -- is completely specified by the first two time-steps, independent of the size of all future time-steps!

This simple observation is the basis for the following corollary.
\begin{cor}
Let particle position and velocity again be updated according to \eqref{eq:gmotiononly} with $\bu^0 \cdot \bB = 0$.  As before, assume that the sequence $\Dt_n$ is arbitrary and $\bB$ is independent of $\bx$ and $t$.  Then the particle's gyrocenter -- and thus the entire gyro-orbit -- is recovered exactly in the two limits 
	\begin{enumerate}[label=(\roman*)]
		\item $\displaystyle \Oc \max\{ \Dt_0, \Dt_1 \} \rightarrow 0$,
		\item $\displaystyle \Oc \Dt_0 \rightarrow \infty$.
	\end{enumerate}
\end{cor}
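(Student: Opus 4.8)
The plan is to reduce the problem to the plane perpendicular to $\bB$ and then show that the center of the circle guaranteed by Lemma~1 coincides with the analytic gyrocenter $\bx_c = \bx^0 + \Oc^{-1}\bu^0 \times \bb$. First I would dot the velocity update in \eqref{eq:gmotiononly} with $\bB$ to get $\bu^{n+1}\cdot\bB = \bu^n\cdot\bB$, so that $\bu^0\cdot\bB = 0$ forces every $\bu^n$, and hence every displacement $\Dt_n \bu\nph$, into the perpendicular plane; the trajectory is therefore planar and Lemma~1 applies. By Lemma~1 the $\bx^n$ lie on a circle of radius $\rho = u/\Oc$, and the discussion preceding the corollary shows that this circle's center is already pinned down by $\{\bx^0,\bx^1,\bx^2\}$. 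It thus suffices to locate that center. Rather than treat the two limits separately, I would aim for the \emph{stronger} statement that the center equals $\bx_c$ for every admissible step sequence, from which both (i) and (ii) follow a fortiori.

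The key tool is the elementary geometry of the Crank--Nicolson rotation. Using \eqref{eq:rotangle}, the pure-gyration update is $\bu^{n+1} = \mathcal{R}_{\theta_n}\bu^n$ with $\tan(\theta_n/2) = \Oc\Dt_n/2$, so the half-step velocity is $\bu\nph = \cos(\theta_n/2)\,\mathcal{R}_{\theta_n/2}\bu^n$ and the displacement is the chord $\bx^{n+1}-\bx^n = \Dt_n \bu\nph = (2/\Oc)\sin(\theta_n/2)\,\mathcal{R}_{\theta_n/2}\bu^n$ of the radius-$\rho$ circle. I would then examine the increment of the candidate center $\bx^n + \Oc^{-1}\bu^n\times\bb$ across a single step, namely $(\bx^{n+1}-\bx^n) + \Oc^{-1}(\bu^{n+1}-\bu^n)\times\bb$. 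Substituting the chord formula for the first term and $(\mathcal{R}_{\theta_n}-\mathbf{I})\bu^n$ for the second, a one-line rotation identity shows that the two contributions are equal and opposite, so the increment vanishes. Hence $\bx^n + \Oc^{-1}\bu^n\times\bb$ is independent of $n$ and equal to $\bx_c$; since every $\bx^n$ sits at distance $\rho$ from this fixed point, $\bx_c$ is exactly the circle's center.

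With the exact result in hand, the two advertised limits are then immediate and serve as a useful check. In limit (ii), $\Oc\Dt_0\to\infty$ forces $\theta_0\to\pi$, so the half-step midpoint $\bx\nph = \bx^0 + \Oc^{-1}\sin(\theta_0/2)\,\mathcal{R}_{\theta_0/2}\bu^0 \to \bx^0 + \Oc^{-1}\bu^0\times\bb = \bx_c$: the first step subtends a diameter, and the circumcenter is its midpoint independent of the (arbitrary) second step. In limit (i), $\Oc\max\{\Dt_0,\Dt_1\}\to 0$, the second-order convergence of Crank--Nicolson guarantees that $\bx^0,\bx^1,\bx^2$ approach three points on the true gyro-orbit, whose circumcenter is $\bx_c$.

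The step I expect to be the genuine obstacle is limit (i): as $\Dt\to 0$ the chords shrink and the three defining points coalesce, so any direct circumcenter formula degenerates into a $0/0$ indeterminate that must be resolved by a careful Taylor expansion in $\theta_0$ and $\theta_1$. The virtue of the exact-identity route of the second paragraph is precisely that it sidesteps this degeneracy: it never forms the circumcenter explicitly, but instead exhibits a conserved quantity whose value is manifestly $\bx_c$. A secondary, minor point to dispatch carefully is the orientation bookkeeping of $\mathcal{R}_\theta$ relative to $\times\bb$, since a sign slip there would turn the cancellation in the key identity into a doubling; fixing the sense of rotation once from \eqref{eq:rotangle} and using it consistently removes this hazard.
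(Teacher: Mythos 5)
Your proof is correct, and it takes a genuinely different — and in fact stronger — route than the paper's. Your central identity holds exactly: writing the update \eqref{eq:gmotiononly} as $\bx^{n+1}-\bx^n = \Dt_n\,\bu\nph$ and $\bu^{n+1}-\bu^n = \Oc\Dt_n\,\bu\nph\times\bb$, the increment of the candidate center $\bx^n + \Oc^{-1}\bu^n\times\bb$ is $\Dt_n\,\bu\nph + \Dt_n\,(\bu\nph\times\bb)\times\bb = \Dt_n\,\bu\nph - \Dt_n\,\bu\nph = 0$, using only $\bu\nph\cdot\bb=0$ and $\|\bb\|=1$ (note the cancellation needs no orientation bookkeeping for $\mathcal{R}_\theta$ at all, so the hazard you flag in your last paragraph does not arise in the main identity). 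Since $\|\Oc^{-1}\bu^n\times\bb\| = u/\Oc = \rho$, every $\bx^n$ lies on the circle of radius $\rho$ centered at the \emph{analytic} gyrocenter, for an \emph{arbitrary} step sequence — the two limits in the corollary are special cases, and your invariant simultaneously reproves the radius statement of Lemma 1 in this setting. The paper instead treats the limits separately: for (ii) it computes $\lim_{\Oc\Dt_0\to\infty}\bu^{1/2} = 2\bu^0\times\bb/(\Oc\Dt_0)$ to show $\bx^1-\bx^0$ is a diameter of the true orbit, and for (i) it combines second-order convergence of the first three points with the rigidity furnished by Lemma 1 — essentially the argument you relegate to a consistency check. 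Your conserved-quantity argument is more elementary and cleanly sidesteps the degenerate-circumcenter issue you correctly identify in limit (i); what it does not provide is the chord-length and inscribed-angle machinery of Appendix A, which the paper reuses in Appendix C to bound the anomalous displacements, so the geometric route retains independent value there.
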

\begin{proof}
See Appendix B.
\end{proof}
We thus see that if one wishes to recover the gyro-orbit accurately, one only needs to place a constraint on the initial time-step(s).  One may either initialize a simulation with two very small time-steps satisfying $\Oc \max \{ \Dt_0, \Dt_1 \} \ll 1$, or one may insist on a very large initial step satisfying $\Oc \Dt_0 \gg 1$.  

Having confirmed that we do not sacrifice our ability to capture gyro-motion by varying $\Dt$ from time-step to time-step, we return to our strategy of alternating large steps $\Dt$ with smaller steps $\dt$.  As already noted, this results in the half-step gyration velocity rotating by angle $\psi = (\theta + \delta \theta)/2$ at every time-step.  Under pure gyration, the angle connecting $\bx^n \rightarrow \bx^{n+1} \rightarrow \bx^{n+2}$ is thus $\pi - \psi$ and is inscribed in the circle of radius $\rho$ on which these three particle positions lie.  By the inscribed-angle theorem, the arc connecting $\bx^n \rightarrow \bx^{n+2}$ subtends a central angle $2(\pi - \psi)$.  This state of affairs is illustrated in Figure \ref{fig:inscangle}.

\begin{figure}[h]
	\centering
	\includegraphics[width=0.5\textwidth]{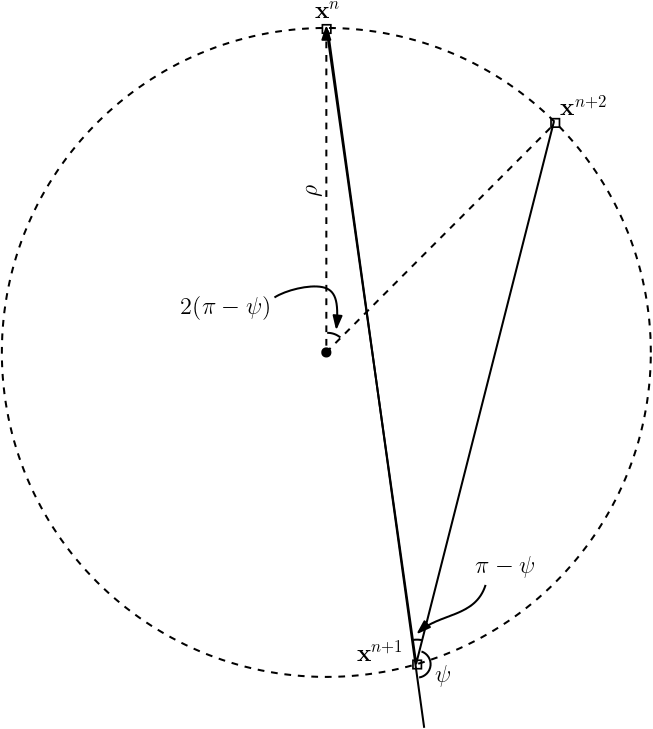}
	\caption{Illustration of the relationship between half-step velocity rotation angle between two time-steps - i.e. $\psi$ - and the gyrophase angle subtended by the resulting particle positions.} \label{fig:inscangle}
\end{figure}

We find it intuitive to think in terms of asking the particle to sample a number $m$ of equispaced gyrophases over a sequence of time-steps.  This $m$ need not be the same as $n_g$.  Indeed, it is not even necessary that $m$ be an integer.  Ultimately, we will choose $m$ to make the time-step constraints that result from this alternating-step strategy as loose as possible.   Using the above results, we see that if we wish for the 
time-step particle locations to sample $m$ evenly spaced points around the orbit, we must require $2(\pi - \psi)m = 2\pi$.  In this way, we traverse the entire circle over $m$ 
time-steps.  Solving this expression for $\delta \theta$ gives:
\begin{equation} \label{eq:dthetaform}
	\delta \theta = \left( 1 - \frac{1}{m} \right) 2 \pi - \theta.
\end{equation}
Recall that the large time-step $\Dt$ sets the large rotation angle $\theta$.  Given that, the formula above tells how to pick the small time-step.  Indeed, \eqref{eq:rotangle} may be inverted to find:
\begin{equation} \label{eq:rotangleinverse}
	\Oc \dt = 2 \sqrt{ \frac{ 1 - \cos \delta \theta }{ 1 + \cos \delta \theta } },
\end{equation}
so the angle $\delta \theta$ uniquely specifies the small time-step value.  It is instructive to substitute \eqref{eq:dthetaform} into \eqref{eq:rotangleinverse}, yielding:
\begin{equation} \label{eq:Ocdt_def}
	\Oc \dt = 2 \sqrt{ \frac{ 1 - \cos \left( \theta + 2\pi/m \right) }{ 1 + \cos \left( \theta + 2\pi/m \right) } }.
\end{equation}
One can in theory use trigonometric identities and \eqref{eq:rotangle} to directly write $\Oc \dt$ in terms of $\Oc \Dt$, but the expression is cumbersome and unenlightening.  However, we note that in the limit $\Oc \Dt \gg 1$, we have $\theta \approx \pi$, and therefore: 
\begin{equation}
	\Oc \dt \approx 2 \sqrt{ \frac{ 1 + \cos \left( 2\pi/m \right) }{ 1 - \cos \left( 2\pi/m \right) } }.
\end{equation}
Taylor-expanding the cosine, we find:
\begin{equation} \label{eq:approxdt}
	\Oc \dt \approx \frac{2}{\pi} m,
\end{equation}
which is reasonably accurate for $m \geq 8$.  While these limits are useful in building intuition, we use the exact expression (Eq. \ref{eq:Ocdt_def}) when implementing the scheme.  


\subsection{Time-step Constraints}
As ever, the ability to step over the gyration time-scale does not permit arbitrarily large time-steps.  An understanding of the restrictions on time-step size is crucial in practice to avoid accepting erroneous results as legitimate.  In this section, we extend the results on time-step restrictions developed for the previous scheme \eqref{eq:CBFVdef} in \cite{ricketson2019energy}.  These prior results were summarized in Table 1.  

Here, we will restrict ourselves to the case $u\nph \gg v_E$ when $k_\perp \rho \sim 1$.  This is so because we require $v_E \Dt \ll k_\perp^{-1} \approx \rho$ to resolve the $\bE \times \bB$ drift motion, and $\Omega_c \Dt \gg 1$ implies $u\nph \Dt \approx 2\rho$, and therefore $u\nph \gg v_E$. Indeed, the very assumption that the gyro-orbit is circular (rather than elliptical) implies that the gyration velocity is large compared to the $\bE \times \bB$ drift velocity.  Promising steps to relax this assumption have been taken in \cite{joseph2021guiding}, but we only consider here the standard gyrokinetic approximation.  In cases with $k_\perp \rho \ll 1$, the time-step constraints previously derived in \cite{ricketson2019energy} in the limit $u^{n+1/2} \ll v_E$ still apply.

As in \cite{ricketson2019energy}, there are two constraints one must obey if there is to be hope of accurate integration.  First, the time required to accumulate an accurate gyro-average of $\mathbf{F}_{eff}$ must be small compared to time-scales of interest, denoted $\tau_{res}$.  Second, the anomalous displacement due to imperfect gyro-averaging on shorter time-scales should be small compared to length scales of interest, denoted here by $L_{res}$.  

The mathematical formulation of the first constraint is considerably simplified for the new scheme due to the alternating large and small time-steps.  The time required to accumulate an accurate gyro-average is simply $m (\Dt + \dt)/2$.  We thus simply have the constraint:
\begin{equation} \label{eq:timerestric}
	m \frac{\Dt + \dt}{2} < \tau_{res}.
\end{equation}
Of course, if $\Oc \tau_{res} \gg 1$, this still permits $\Oc \Dt \gg 1$ for modest $m$ values.  Indeed, isolating $\Oc \Dt$ and using the approximate value of $\dt$ from \eqref{eq:approxdt}, this constraint may be simplified to
\begin{equation} \label{eq:timerestricsimp}
	\Oc \Dt < \frac{2}{m} \Oc \tau_{res} - \frac{2}{\pi}m.
\end{equation}

More complex is the derivation of time-step restrictions due to anomalous displacements arising from short-time-scale errors in gyro-averaging.  As before, such anomalous displacements arise from the fact that the $\nabla B$ force only approximates the physically correct value in a gyro-averaged sense, and thus deviates from it on short time-scales, leading to anomalous displacement.  
Again following \cite{ricketson2019energy}, we recall that the $\nabla B$ drift has the physical velocity
\begin{equation}
	\bv_{\nabla B} = \frac{\bb}{\Oc} \times \frac{\mu}{m} \nabla B, 
\end{equation}
while the AP scheme \eqref{eq:CBFVdef} features the drift velocity
\begin{equation}
	\bv^\text{AP}_{\nabla B} = \frac{\bb}{\Oc} \times \frac{1}{m} \left( \mu_\text{eff}^\text{CN} \nabla B - \bF_{eff} \right),
\end{equation}
where $\mu_\text{eff}^\text{CN} = \mu / (1 + \Oc^2 \Dt^2 / 4)$.  The scheme thus features an anomalous displacement with velocity
\begin{equation}
	\Delta \bv_{\nabla B} = \bv_{\nabla B} - \bv^\text{AP}_{\nabla B}.
\end{equation}
By virtue of the scheme's construction, $\left\langle \Delta \bv_{\nabla B} \right\rangle = 0$ to a good approximation, but it may deviate from zero on shorter time-scales.  In Appendix C, we show that this displacement may be analyzed in the presence of alternating large and small time-steps using the tools developed in Appendix A.  The end result of the analysis in Appendix C is another constraint on time-step: if the anomalous displacement's magnitude is to be bounded above by $L_{res}$, then one requires
\begin{equation} \label{eq:spacerestric}
	\Oc \left( \frac{\Dt + \dt}{2} \right) < \min \left\{ \frac{L_{res}}{\rho} \frac{\sin \left( 2 \pi / m \right)}{\delta_\perp}, \left[ \frac{L_{res}}{\rho} \frac{\sin \left( \pi / m \right)}{\delta_\parallel} \right]^{1/2} \right\}.
\end{equation}

Typically, it will be the case that $L_{res} \approx k_\perp^{-1}$.  Additionally, $\delta_\perp$ and $\delta_\parallel$ are as defined in \eqref{eq:deltadefs}.  Thus, we note that we are still permitted to step over the gyration time-scale even when $k_\perp \rho \sim 1$ so long as $\delta_\perp$ and $\delta_\parallel$ are small.  Again, if we assume $\Oc \Dt \gg 1$, we can use our approximation of $\Oc \dt$ to write a restriction on the large time-step $\Dt$ alone:
\begin{equation} \label{eq:spacerestricsimp}
	\Oc \Dt < 2\min \left\{ \frac{L_{res}}{\rho} \frac{\sin \left( 2 \pi / m \right)}{\delta_\perp}, \left[ \frac{L_{res}}{\rho}\frac{\sin \left( \pi / m \right)}{\delta_\parallel} \right]^{1/2} \right\} - \frac{2}{\pi} m.
\end{equation}

Note that the restrictions \eqref{eq:timerestric} and \eqref{eq:spacerestric} each represent an improvement over the previously derived restrictions in Table 1.  Previously, time-steps were bounded above by multiples of $\sqrt{\Oc \tau_{res}}$ and $1/\sqrt{\delta_\perp}$, while now they are bounded by multiples of $\Oc \tau_{res}$ and $1/\delta_\perp$.  This arises directly from the improved gyro-averaging, which for very large time-steps shortens the number of steps required to accumulate a representative sample of gyrophases.  In typical, strongly magnetized applications, we expect $ \Oc \tau_{res}, \delta_\perp^{-1} \gg 1$, so this improvement can lead to significant computational gains.  

Of course, we still find that the time-step is constrained to be $O(1/\sqrt{\delta_\parallel})$.  Thus, in situations in which parallel motion represents the active constraint on time-step size, no improvement has been achieved.  However, it is typically the case in fusion plasmas that characteristic length scales in the parallel direction are much longer than those in the perpendicular direction.  One thus expects it to be somewhat rare that resolution of parallel motion represents a more stringent constraint than perpendicular motion, and there is thus reason to hope that the relaxed constraints in the perpendicular direction will be applicable to the majority of particles in typical simulations.  

\subsection{Adaptive Time-Stepping and Gyro-averaging}
\label{sec:alt-dt}
It is clearly desirable to select the largest possible time-step allowed by the constraints above, so that the orbit may be captured accurately with minimal computational cost.  In this section, we present a slight modification of the adaptive time-stepping scheme of \cite{ricketson2019energy}.  The modifications are motivated by two considerations: (a) the relaxed time-step constraints of the new scheme derived in the section above, (b) the simplifying assumption that $v_E^{n+1/2} \ll u^{n+1/2}$ is, as argued above, necessary for stepping over the gyroperiod when $k_\perp \rho \sim 1$.  Future work will seek to derive a more general adaptive time-stepping strategy that smoothly transitions between gyrokinetic, drift-kinetic (in which it may be true that $u^{n+1/2} \leq v_E$), and unmagnetized regimes.  Additionally, the adaptive gyro-averaging in the new scheme requires an estimate of a local value of $k_\perp$ in order to choose the number of gyrophase samples $n_g$.  There are many possible methods to achieve this, but we choose a particularly simple one below.  

In principle, adaptive time-step selection is a simple process.  One estimates local length scales, both in the parallel and perpendicular directions, as well as particle velocity in both directions.  Dividing a characteristic length by a velocity gives a characteristic time.  One then chooses a time-step that is small compared to both parallel and perpendicular characteristic times, and also respects the time-step constraints derived in the previous section.  The challenge, of course, is in the details.  

First, we require an estimate of $k_\perp$ to use \eqref{eq:adaptive_ng} to specify a a number of gyro-average samples.  The simplest method would be to estimate $k_\perp$ via $\| \nabla_\perp \mathbf{E} \| / E$, but this of course fails at local extrema of the electric field.  We thus use the somewhat more robust estimate
\begin{equation}
	k_\perp = \max \left\{ \frac{\| \nabla_\perp \mathbf{E} \|}{E}, \left( \frac{ \| D^2 \mathbf{E} \| }{E} \right)^{1/2} \right\},
\end{equation}
where $\| D^2 \mathbf{E} \|$ is defined by
\begin{equation}
	\| D^2 \mathbf{E} \|^2 = \| \partial_x^2 \mathbf{E}\|^2 + \| \partial_y \mathbf{E} \|^2,
\end{equation}
where $x$ and $y$ here are Cartesion coordinates orthogonal to $\bB$ at the particle location.  In our code, $x$ is oriented in the $\bE \times \bB$ direction, and $y$ is in the $(\bE \times \bB) \times \bB$ direction.  Intuitively, this procedure uses both first and second derivatives to estimate $k_\perp$ and trusts whichever estimate is more pessimistic.  Of course, there are numerous methods to estimate characteristic perpendicular length scales in $\bE$, each with benefits and drawbacks.  Implementation of this method in self-consistent PIC codes in the future will shed additional light on appropriate choices.  

We now consider the three free parameters in Eq. \eqref{eq:adaptive_ng}.  Note that when the nondimensional step size $\Oc \Dt \leq b^{-1}$, we are guaranteed $n_g = 1$.  One expects to need no gyro-averaging when the gyroperiod is reasonably well resolved, so we choose $b^{-1} = 1/2$.  For large time-steps, $a$ sets the number of gyro-samples we use when $k_\perp \rho = 1$.  We find $a=16$ samples to be reasonably accurate for these parameters. The parameter $c$ sets the sensitivity of $n_g$ to variations in $k_\perp$.  Because our estimate of $k_\perp$ may vary somewhat from time-step to time-step, we find it desirable to keep $n_g$ more consistent in size by choosing $c=1/2$.  Finally, we impose the absolute maximum value of $n_g$ to be 64.  

We next discuss how we estimate characteristic length scales.  We choose
\begin{equation}
\begin{split}
	L_{res,\parallel} &= \Gamma \min \left\{ \frac{B}{| \nabla_\parallel B | }, \kappa^{-1}, \frac{v_E}{| \nabla_\parallel v_E |} \right\}, \\
	L_{res,\perp} &= \Gamma \min \left\{ \frac{B}{\| \nabla_\perp B \|}, \frac{v_E}{\| \nabla_\perp v_E \|} \right\}.
\end{split}
\end{equation}
Here, $\kappa = \| (\bb \cdot \nabla) \bb \|$ is the local magnetic field curvature.  $v_E$ is computed using a gyro-averaged electric field, so this estimate does not explicitly include any gyro-scale structure of $\mathbf{E}$.  It is expected that such structure is captured by the adaptive gyro-averaging procedure.  $\Gamma$ is a free parameter that governs what fractional change in each quantity we are willing to tolerate within a time-step.  We follow \cite{ricketson2019energy} by choosing $\Gamma = 0.1$.  

Pseudo-code for the adaptive time-stepping procedure is as follows:
\begin{enumerate}
	\item Compute $k_\perp$, $n_g$, $L_{res,\parallel}$, and $L_{res,\perp}$ as described above.  
	\item Let 
	\begin{equation}
		\tau_{res} = \min \left\{ \frac{L_{res,\perp}}{v_E}, \frac{L_{res,\parallel}}{v_\parallel} \right\}.
	\end{equation}
	\item Set $F = (L_{res,\perp}/\rho) \sin \left( 2 \pi/m \right)$, and use time-step
	\begin{equation}
		\Dt = 2 \Omega_c^{-1} \alpha \min \left\{ \frac{F}{\delta_\perp}, \sqrt{\frac{F}{\delta_\parallel}}, \frac{\Omega_c \tau_{res}}{m} \right\}
	\end{equation}
\end{enumerate}
As in \cite{ricketson2019energy}, $\alpha$ is a parameter that controls how close to the time-step restrictions we are willing to approach.  As before, we use $\alpha = 0.9$.  

Clearly, this procedure is only used to pick the \textit{larger} time-steps $\Delta t$ in the alternating time-step procedure.  Note that if $\Delta t$ is sufficiently small, it is actually possible that the ``smaller" time-step $\delta t$ defined by \eqref{eq:Ocdt_def} and \eqref{eq:rotangle} can be larger than $\Delta t$, and thus inadmissible.  For this reason, at the subsequent time-step, we will choose $\min \{ \Delta t, \delta t \}$, where $\Delta t$ is specified using the procedure above.  

In this initial proof-of-principle study, we additionally specify an absolute cap on the size of $\Omega_c \Dt$.  In our tokamak test case below, this cap is set to 70, but results are not substantially impacted by small changes to this parameter.  


\subsection{Algorithm Summary and Complex Field Configurations} \label{sec:summary}

In this section, we outline the proposed algorithm and provide an intuitive justification for its use in complex field configurations.  The overall structure of the procedure is as follows:

\begin{enumerate}
	\item Given $\bx^n$, $\bv^n$, choose an appropriate time-step $\Dt$ and number of gyrophase samples $n_g$ using the procedure in the previous subsection.  
	\item Advance in time by one step of size $\Dt$ using the scheme \eqref{eq:CBFVdef}, but with the modification that 
	\begin{equation} \label{eq:expgavgE}
		\bE\nph = \frac{1}{n_g} \sum_{k=1}^{n_g} \bE \left( \bx_c^{n+1/2} + \rho^{n+1/2} \mathbf{n}_k, t^{n+1/2} \right).
	\end{equation}
	\item Now, having $\bx^{n+1}$, $\bv^{n+1}$, choose the next time-step $\dt$ satisfying
	\begin{equation}
		\Oc^{n+1} \dt = 2 \sqrt{ \frac{ 1 - \cos \left( \theta + 2\pi/m \right) }{ 1 + \cos \left( \theta + 2\pi/m \right) } }
	\end{equation}
	where $\theta$ is defined by $\cos \theta = (1 - (\Oc^n)^2 \Dt^2/4)/(1 + (\Oc^n)^2 \Dt^2/4)$, and $\Oc^n = q B(\bx^n)/m$.  
	\item Advance by one step of size $\dt$, again using \eqref{eq:CBFVdef} with the modification \eqref{eq:expgavgE}.  
	\item $n \leftarrow n + 2$ and return to step 1.
\end{enumerate}



We note that in our derivations we took advantage of the fact that the gyration velocity $\bu$ is exactly $\bv - \bv_E$.  This is still true to good approximation in varying fields, for all other drift motions are small in the magnetized regime.  When these drift motions grow to a substantial size (as can easily be tested at each step based on local field quantities), the gyro-motion must be resolved in any case.  


One should choose $m$ so that time-steps can be taken as large as possible on average.  Even casual investigation of \eqref{eq:timerestric} and \eqref{eq:spacerestric} reveals that the optimal $m$ that allows the largest possible average time-step depends on which constraint is active.  We take a simplified approach and uniformly choose $m=5$.  


\section{Numerical Results}
The new scheme is tested in four configurations.  The first features constant $\bB$, with $\bE$ varying sinusoidally in one coordinate.  This case is chosen for the ability to compare with asymptotic theory predicting the $\bE \times \bB$ drift velocity as a function of $k_\perp \rho$ for $k_\perp \rho \lesssim 1$.  In the second case, we let $\bE$ vary sinusoidally in both coordinates and confirm that the alternating time-step methodology permits enlarged time-steps. The third case features linearly varying $\bB$ with $\bE$ again varying sinusoidally in both perpendicular coordinates, although this time with $k_\perp \rho > 1$.  The fourth case uses a tokamak geometry with Solov'ev equilibrium $\bB$ field with a rapidly varying perpendicular $\bE$, chosen to test the scheme's ability to capture parallel dynamics in conjunction with complex geometry and FLR effects.  

For clarity, we work in a dimensionless formulation in which $q_p/m_p = 1$ (implying $B = \Oc$).  To measure the accuracy of the present scheme (which we term CBFV+FLR), we compare with asymptotic theory when available and with the results of the widely-used Boris integrator using $\Oc \Dt = 0.1$.  We also compare against other AP integrators such as the scheme of Brackbill, Forsluand and Vu (BFV) \cite{brackbill1985simulation, vu1995accurate}, the present scheme's precursor (CBFV) \cite{ricketson2019energy}, and the standard Crank-Nicolson (CN) scheme.

The systems of equations resulting from our implicit integrator are solved using the preconditioned Picard iteration introduced in \cite{koshkarov2022fast} with nonlinear tolerance set to $10^{-12}$.  In the first three cases, the particle is initialized in the $x$-$y$ plane at
\begin{equation} \label{eq:initialization}
	\bx(t=0) = \frac{1}{B} \left( \begin{array}{c} \cos \omega \\ -\sin \omega \end{array} \right), \qquad \bv(t=0) = \left( \begin{array}{c} -\sin \omega \\ -\cos \omega \end{array} \right)
\end{equation}
for some $\omega$, and the $B$ above is evaluated at the origin.  This is done so that the particle's initial gyrocenter is located at the origin.  While we do not report them here in the interest of brevity, all tests are performed for multiple values of $\omega$ to confirm the gyrophase-independence of the results.  

The code is implemented in Python, and beyond the use of the efficient solver of \cite{koshkarov2022fast}, no effort has been made at performance optimization in this proof-of-principle study.  This has two noteworthy consequences.  Firstly, direct computation-time comparisons between the new method and Boris (or any other method) are not to be trusted if one wishes to extrapolate to optimized, compiled code running on HPC systems.  We thus do not report specific numbers, instead only noting that our implicit implementation is typically between 10$\times$ and 120$\times$ faster than well-resolved Boris in the experiments performed, lending confidence that considerable speed-ups are possible.  Note that in practice, speed-ups will be problem-dependent, and in particular depend on the non-dimensional parameters appearing in \eqref{eq:deltadefs}.

Secondly, the exact energy conservation properties the scheme provably enjoys in the implicit-PIC context do not carry over to single particle motion in prescribed analytic fields.  Indeed, the Crank-Nicolson scheme on which it is based only conserves energy to $O(\Dt^2)$ in analytically prescribed fields, and our modifications do not change that fact.  In \cite{ricketson2019energy}, this was corrected using the scheme of \cite{simo1992exact}, in which the electric field is modified according to
\begin{equation}
	\bE\nph \rightarrow \frac{\phi^n - \phi^{n+1}}{\left( \bx^{n+1} - \bx^n \right) \cdot \bE\nph} \bE\nph
\end{equation}
to recover exact energy conservation, where $\phi$ is the electrostatic potential associated with $\bE$.  However, in the highly oscillatory electric fields used here, the exactly-conservative scheme has tremendous difficulty converging without sophisticated preconditioning.  As such, when measuring energy conservation, we will be satisfied with observing that energy errors are small and do not feature secular growth in time.  Future work will empirically confirm whether exact conservation is realized when coupling to an implicit PIC solver.

\subsection{1-D Sinusoidal Variation featuring asymptotic theory}
We set $\bB = 100 \widehat{\mathbf{z}}$, with 
\begin{equation}
	\bE = \cos (k_\perp y) \widehat{\mathbf{y}}.
\end{equation}
The particle is initialized with $\omega = \pi/2$.  Asymptotic theory \cite{currelinotes} predicts that the $\bE \times \bB$ drift velocity in this scenario is given by
\begin{equation}
	\bv_E \approx \bv_{E,gc} \left( 1 - \frac{k_\perp^2 \rho^2}{4} \right),
\end{equation}
where $\bv_{E,gc}$ is the drift velocity computed with $\bE$ at the gyrocenter, i.e., the drift-kinetic limit.  This expression, being derived via Taylor expansion in $k_\perp \rho$, is valid for $k_\perp \rho \lesssim 1$.  

As the particle is initialized with its gyrocenter at the origin (where $\bE = \widehat{\mathbf{y}}$) we have $\bv_{E,gc} = \widehat{\mathbf{x}}/100$.  We simulate to a final time $T=100$, so that the predicted final $x$-position of the particle is simply $x(T) = 1 - k_\perp^2 \rho^2/4$.  This asymptotic theory result is compared in Figure \ref{fig:asymptheory} to Boris, CBFV, and the new scheme (CBFV+FLR).  For this simple test, we do not use adaptive time-stepping or adaptive gyro-averaging.  The larger time-step is fixed at $\Omega_c \Dt = 100$, with alternating steps chosen in the manner described in Section 3.4.  We show results using fixed values of $n_g$ of 8 and 4.   
\begin{figure}
	\centering
	\includegraphics[width=0.6\textwidth]{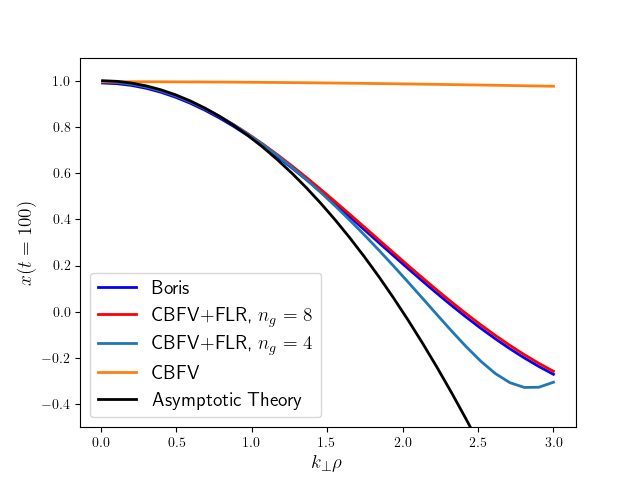}
	\caption{Final $x$-position of the particle resulting from Boris (blue) integration, the new scheme with $n_g = 8$ gyrophase samples (CBFV+FLR, red), new scheme with $n_g = 4$ (CBFV+FLR, teal), asymptotic theory (black), and a prior AP scheme with no FLR corrections (CBFV, orange).  We see the ability of the new scheme to match asymptotic theory when $k_\perp \rho \lesssim 1$, as well as good agreement with fully-resolved Boris out to even large $k_\perp \rho$ when 8-point gyroaveraging is used.}
	\label{fig:asymptheory}
\end{figure}

One immediately sees agreement with the asymptotic theory in its range of validity, $k_\perp \rho < 1$.  Outside that range, the well-resolved Boris simulation may be regarded as the ground truth.  We thus also see excellent agreement of new scheme well beyond $k_\perp \rho = 1$ when $n_g = 8$.  As is to be expected, more accurate gyro-averaging is necessary to improve agreement for large $k_\perp \rho$.  Also as expected, the prior scheme without FLR corrections fails to predict any change in the drift velocity as a function of $k_\perp$.  

\subsection{Alternate Time-stepping test}
Next, we perform a similar test to the last one, but focusing on the benefits of the proposed strategy of alternating large and small time-steps.  We use fields with locally similar qualitative properties but different parameters:
\begin{equation}
	\mathbf{B} = \left( 100 + 3 \sin x \right) \widehat{\mathbf{z}}, \qquad \mathbf{E} = \frac{1}{10} \left( \cos k_x x + \sin k_y y \right).
\end{equation}
We choose $k_x = 5\pi$ and $k_y = 1$.  We initialize the particle at the origin with $\mathbf{v}^0 = \widehat{\mathbf{x}}$.  
These parameters result in $k_\perp \rho \approx 0.15$, which is an optimal regime to demonstrate the benefits of the alternate-stepping strategy.

Because we are specifically interested in pushing the limits of the time-step size in the alternating-step method, for this test we do not use the adaptive time-stepping scheme described in Section \ref{sec:alt-dt}.  Instead, we manually choose the large time-step as $\Omega_c \Delta t = 800$, while the small step $\delta t$ is chosen as described in Section \ref{sec:alt-dt}.  When not using the alternating-step technique, we fix the time-step at $\Omega_c \Delta t = 400$ so that the the two methods have a roughly equal average time-step.  

We plot the resulting orbits in the $x$-$y$ plane as well as their time histories in Figure \ref{fig:AltTest}.  It is immediately evident that the alternating-step approach dramatically improves the accuracy of the orbit, as predicted.  Indeed, the correct orbit is periodic, while without alternating steps the orbit is non-periodic on the time-scales observed.  

\begin{figure}[h!]
	\centering
	\includegraphics[width=0.6\textwidth]{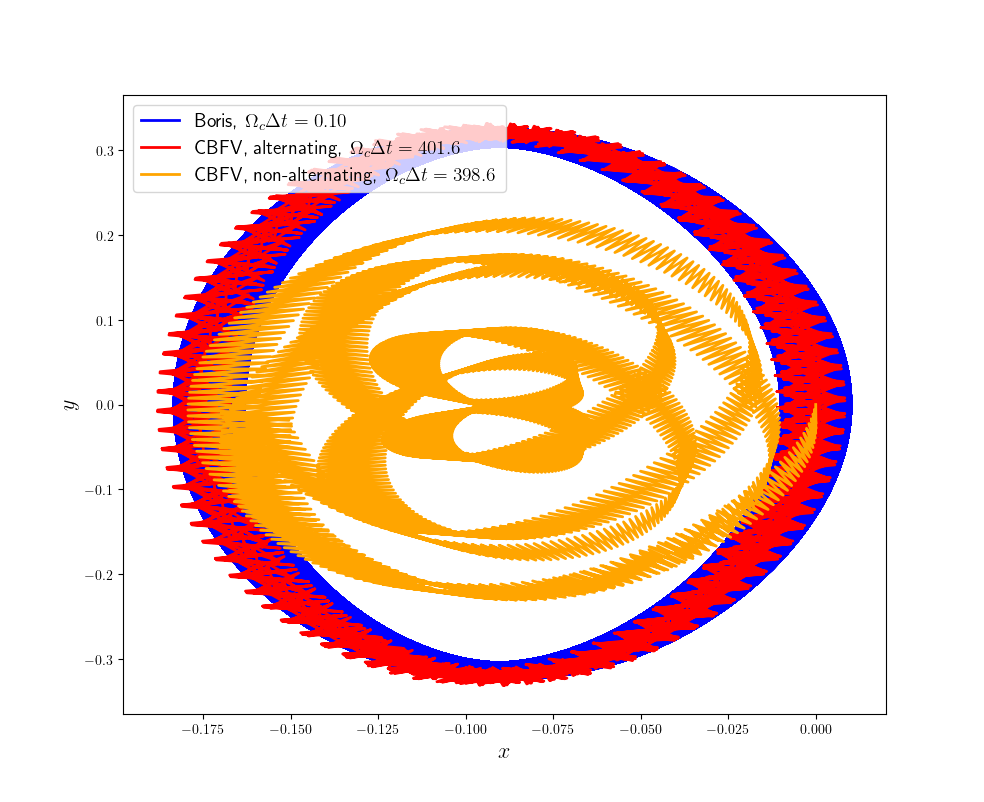}
 \includegraphics[width=0.8\textwidth]{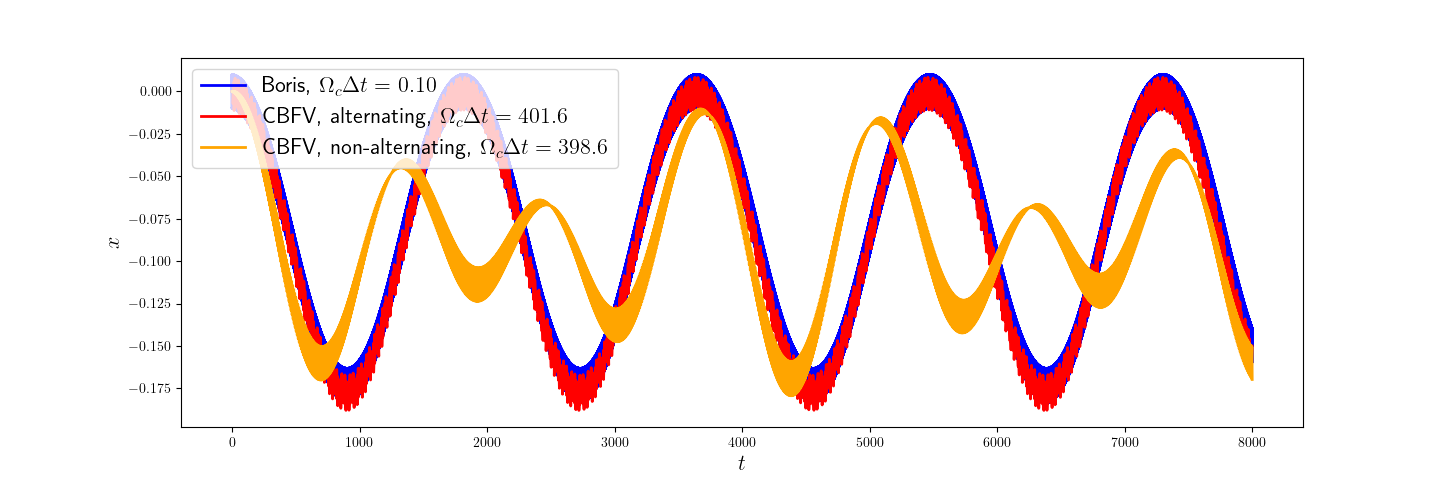}
	\includegraphics[width=0.8\textwidth]{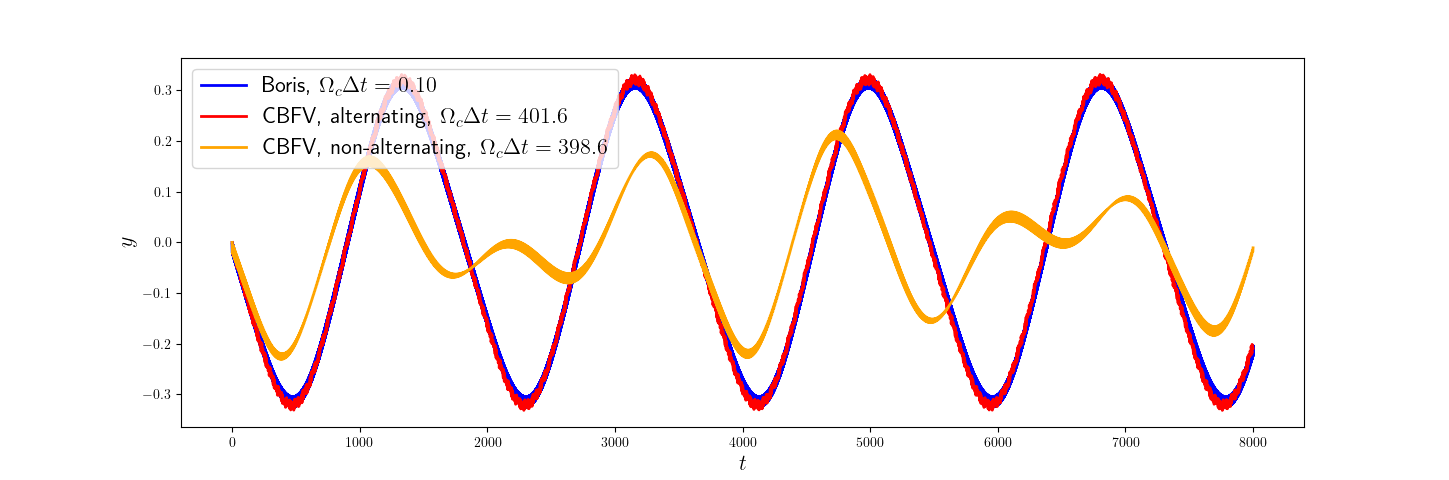}
	\caption{Depiction of orbit  and time-histories for alternating-time-step test problem.  For comparable average time-steps, the alternating-step method reproduces the correct closed, elliptic orbit, while the non-alternating approach does not.} \label{fig:AltTest}
\end{figure}

\subsection{Sinusoidal Variation in Perpendicular Plane with $\nabla B$ drift}
For this test, we set 
\begin{equation} \label{eq:2Dsetup}
	\bB = 100 (1 - x/20) \widehat{\mathbf{z}}, \qquad \bE = \frac{1}{2} \cos (k_x x) \widehat{\mathbf{x}} + \cos (k_y y) \widehat{\mathbf{y}}.
\end{equation}
While several values of $k_x, k_y$ were tested, we report here results for $k_x \rho = \pi/4$, $k_y \rho = \pi/6$, where $\rho$ here is the initial gyroradius of $1/100$.  These wavenumbers are chosen because they lead to a complex orbit arising from competition between the $\nabla B$ drift (in the positive $y$-direction) and $\bE \times \bB$ drift (in the negative $y$-direction), leading to a more challenging test case.  

For this test, we also compare against BFV, CBFV and CN.  Results appear in Figure \ref{fig:2Dsine}.  As expected, CN completely fails to see the $\nabla B$-drift.  BFV and CBFV see the $\nabla B$-drift, but incorrectly predict the confinement in the $x$-direction.  Only CBFV+FLR gives orbits that match Boris.  Lastly, in Figure \ref{fig:2Dsine_energy}, we display energy conservation properties of each scheme.  As discussed above, only $O(\Delta t^2)$ energy conservation is expected in the single-particle context, so it is unsurprising that each scheme has larger energy errors than Boris purely as a result of the much larger time-steps.  It is encouraging, however, to observe that the new modifications presented here do not degrade energy conservation in the single-particle context relative to previous schemes.  

\begin{figure}[h!]
	\centering
	\includegraphics[width=0.6\textwidth]{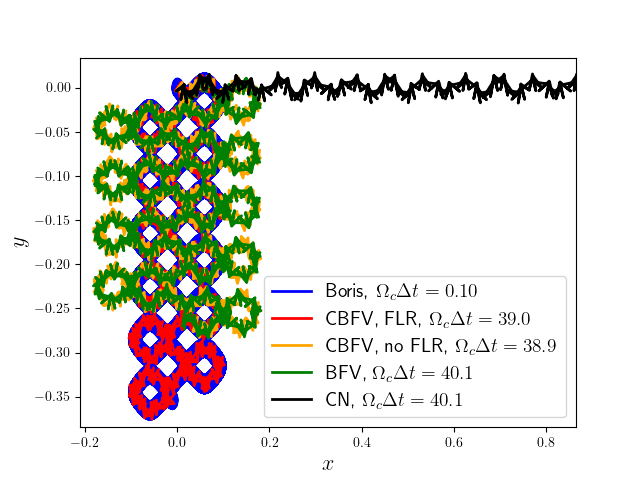}
    \includegraphics[width=0.49\textwidth]{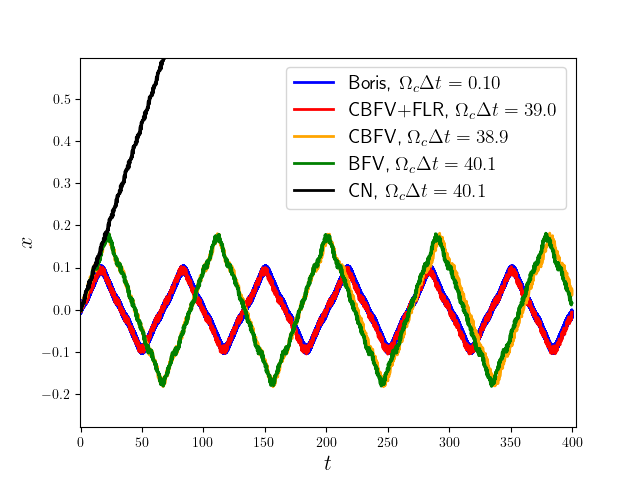}
    \includegraphics[width=0.49\textwidth]{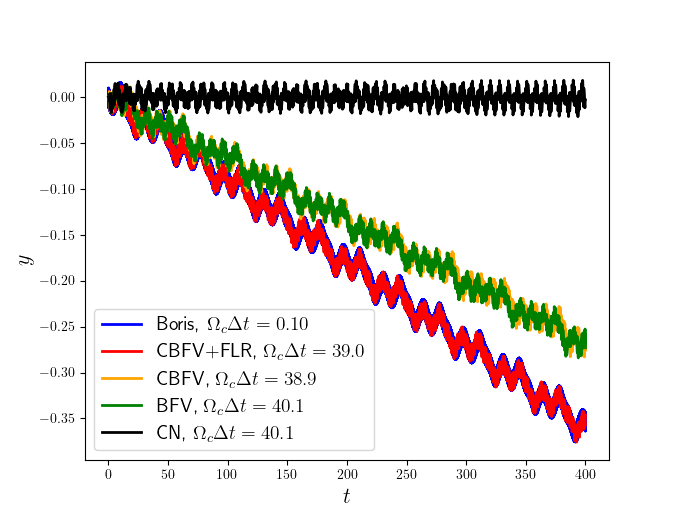}
	\caption{Depiction of orbit for 2-dimensional problem with sinusoidal, gyro-scale variation in electric field.  The FLR corrections presented here are necessary to reproduce the fully-resolved orbit.}
	\label{fig:2Dsine}
\end{figure}

\begin{figure}[h]
	\centering
	\includegraphics[width=0.6\textwidth]{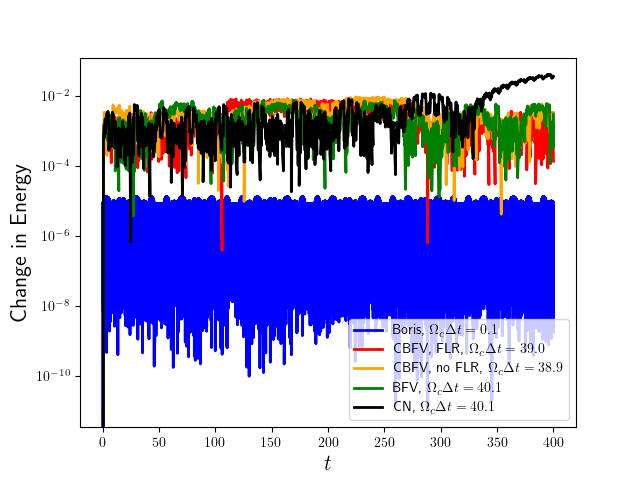}
	\caption{Fractional energy change as a function of time for two-dimensional problem with sinusoidal, gyro-scale variation in electric field.  Exact conservation is not expected here, but the new scheme does not feature secular energy growth and conserves energy no worse than previous schemes.}
	\label{fig:2Dsine_energy}
\end{figure}

\subsection{Tokamak Geometry}
As a final test, we consider a tokamak-like geometry with fixed electric and magnetic fields.  The setup is identical to that in \cite{ricketson2019energy} except for the addition of an electric field with gyration-scale fluctuations, but we reproduce the description here for completeness.  The poloidal magnetic field is based on a simple analytic solution $\psi$ of the Grad-Shafranov equation \cite{cerfon2010one}:
\begin{equation}
	\psi(r,z) = \frac{C}{8} r^4 + d_1 + d_2 r^2 + d_3 \left( r^4 - 4r^2 z^2 \right),
\end{equation}
where $C, d_i$ are constants and $(r, \theta, z)$ is the standard polar coordinate system.  

The flux function $\psi$ specifies the poloidal field by $\bB_p = \nabla \psi \times \mathbf{e}_\theta / r$.  We let $C = 300$, and we take advantage of the fact, shown in \cite{pataki2013fast}, that there is a one-to-one linear map between the $d_i$ and the tokamak shape parameters $\varepsilon$ (inverse aspect ratio), $\kappa_T$ (elongation), and $\delta$ (triangularity).  Thus, the poloidal field is uniquely specified by choosing the shape parameters of the International Thermonuclear Energy Reactor (ITER): $\varepsilon = 0.32$, $\kappa_T = 1.7$, $\delta = 0.33$.  The toroidal field is chosen to be $\bB_{tor} = 800 \mathbf{e}_\theta / r$.  

We additionally impose an electrostatic potential that is a flux function (i.e. expressible as a function of $\psi$) such that $\bE\cdot\bB=0$ according to the ideal MHD Ohm's law, $\mathbf{E} + \bv \times \bB = 0$.  That flux function is chosen to be $\phi = \sin (K \psi) / 2 K$.  $K$ is chosen to target a specific characteristic value of $k_\perp$ in the electric field by computing $K = k_\perp / \| \nabla \psi (\bx^0) \|$.

The particle is initialized at the Cartesian location $\mathbf{x}^0 = 1.2\widehat{\mathbf{x}}$, 
at which point the magnitude of the magnetic field is $B \approx 667.7$.  This knowledge allows us to compute the gyroradius $\rho$ at the initial particle location, and to choose $\kappa$ so that $k_\perp \rho = 1.5$ at the initial particle location.  The value of $k_\perp \rho$ varies in space, but remains $O(1)$ along the entire particle trajectory.  The initial velocity is given in Cartesian coordinates by $\bv^0 = (1, v_y, 0)^T$.  Note that as the field is predominantly toroidal (i.e., along the $y$-axis) at the initial location, $v_y$ may be taken as a reasonable proxy for the particle's initial parallel velocity.  We choose $v_y = 0.6$ because it results in an orbit near the trapped-passing boundary.  Simulations are run to a final time of $T=400$, which corresponds to several traversals of the banana orbit.    

\begin{figure}[h]
	\centering
	\includegraphics[width=0.49\textwidth]{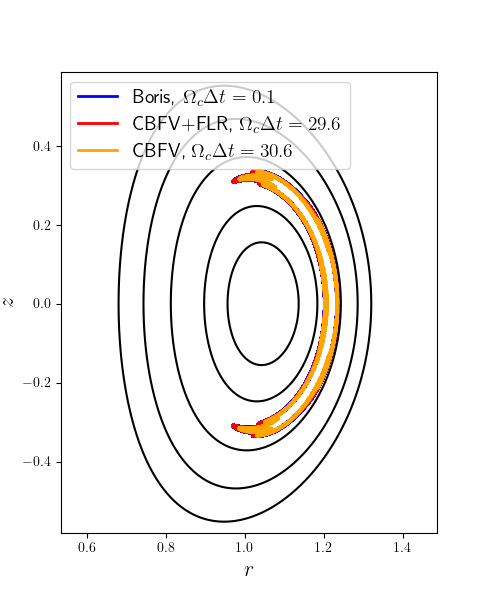}
	\includegraphics[width=0.49\textwidth]{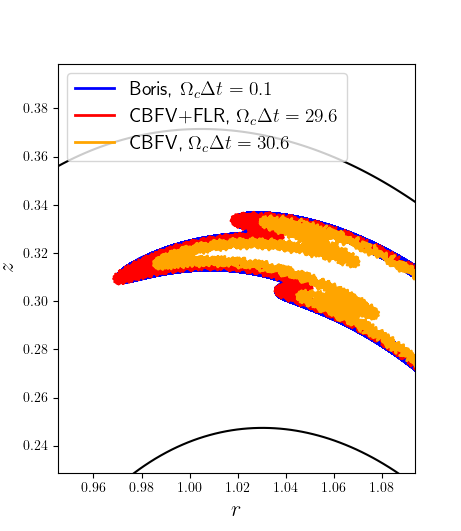}
	\caption{Projection into the poloidal plane of particle orbits for the tokamak equilibrium test case.  Different curves are trajectories resulting from well-resolved Boris integration (blue), the new scheme with FLR corrections (red), and the previous scheme without FLR corrections (orange).  Single lines represent flux surfaces of the Solov'ev equilibrium magnetic field.  Full orbit appears at left, with a zoom-in of the upper turning point on the right.  The modified structure of the orbit due to fine-scale variation in the electric field is clearly visible.  In the poloidal projection of the orbit, the FLR corrections seem to afford only mild improvements in accuracy in capturing the turning points, although this betrays catastrophic errors in their timing, as demonstrated in the next figure.} 
	\label{fig:tokamak_vpar6}
\end{figure}

To make plots more readable, we only show results from Boris, CBFV, and CBFV+FLR for this test (although we have confirmed that the other integrators fail as catastrophically as in the previous test).  Only slight improvements due to the FLR corrections are visible from a poloidal projection of the particle orbit, shown in Figure \ref{fig:tokamak_vpar6}.  However, without FLR corrections, CBFV results in the particle traversing this orbit at the wrong rate.  This is shown in Figure \ref{fig:rzt_vx1}, in which time-histories of $r$ and $z$ are reported.  The rate of traversal is dramatically improved by the FLR corrections, resulting in only a slight deviation from fully-resolved Boris over several banana orbits.  

\begin{figure}[h]
	\centering
	\includegraphics[width=0.9\textwidth]{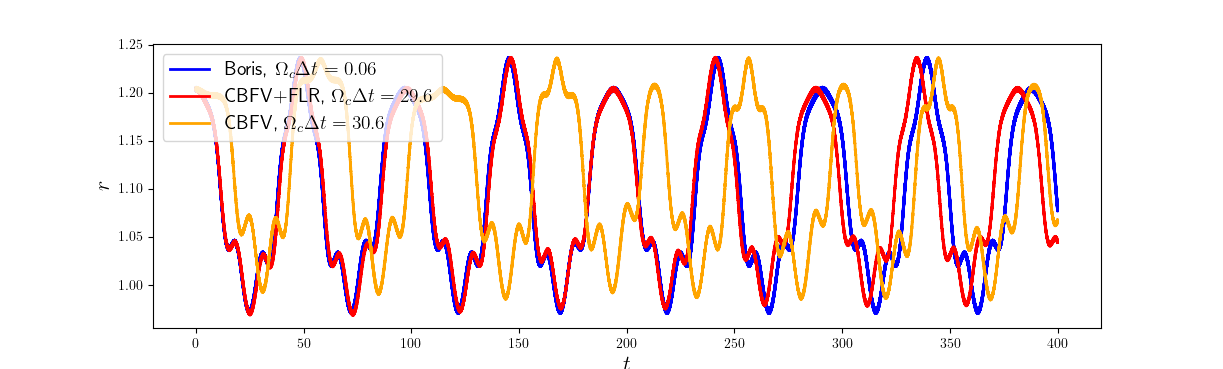}
	\includegraphics[width=0.9\textwidth]{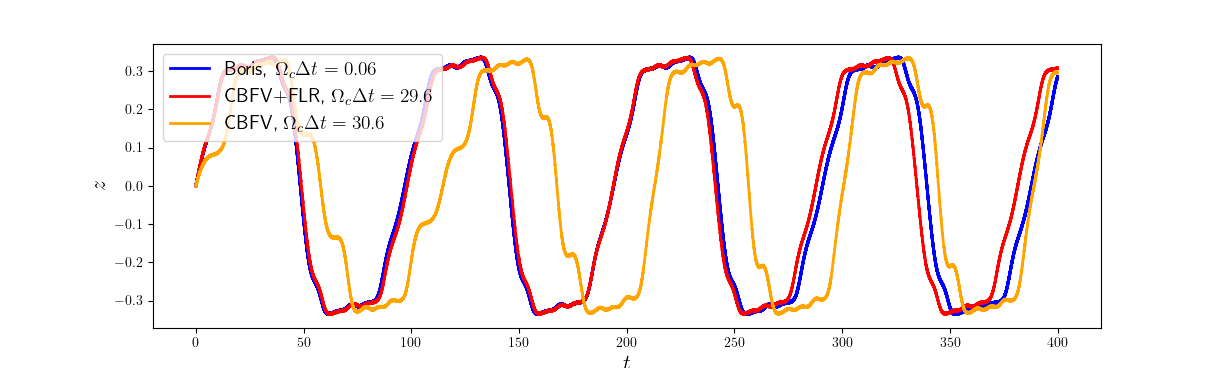}
	\caption{Time histories of $r$ and $z$ for the tokamak equilibrium test case.  Note the dramatically improved temporal accuracy of the new integrator with FLR corrections.}
	\label{fig:rzt_vx1}
\end{figure}




\section{Conclusions}
In this work, we have built on the asymptotic-preserving and energy-conserving integrator of \cite{ricketson2019energy} with two modifications that allow it to capture FLR effects even when stepping over the gyration time-scale.  Those modifications are (a) introducing an explicit and adaptive gyro-average computation of $\bE$, and (b) a strategy for alternating large and small time-steps so that the particle samples evenly spaced gyrophases over successive time-steps.  We have shown that with the first modification, the scheme still reproduces the proper gyro-orbit while the time-step constraints on the scheme are considerably relaxed due to the second.  We have also shown that with an analogous modification to current deposition, the scheme remains compatible with the exact energy conservation enjoyed by existing implicit PIC schemes.  The scheme was tested on single-particle motion with several specified electromagnetic fields, demonstrating the ability of the scheme to accurately predict orbits even when the electric field varies on scales comparable to the gyroradius.  

There are, of course, opportunities for further development.  Of obvious interest is the implementation of this scheme in implicit PIC codes, and its subsequent use in simulating physically relevant phenomena.  Such an implementation will certainly benefit from a more robust and general adaptive time-stepping scheme that can smoothly and accurately handle transitions between magnetized and unmagnetized regimes in the presence of the alternating time-step approach.  Additionally, local charge conservation in the presence of variable $n_g$ requires further investigation.  Each of these is an active area of current research.


\section*{Acknowledgements}
The authors wish to acknowledge private communication with Matthew Miecnikowski, whose input directly led to the idea of alternating large and small time-steps.  Additional discussions with C-S Chang, Guangye Chen, Joshua Burby, and Davide Curreli were valuable in the development of this work.  


\appendix
\section{Proof of Lemma 1}
It suffices to show that any four consecutive particle locations lie on a single circle of radius $\rho = u/\Oc$.  This extends to \textit{all} points by induction as follows.  Suppose $\{ \bx^{k}, \bx^{k+1}, \bx^{k+2}, \bx^{k+3} \}$ lie on a circle of radius $\rho$, as do $\{ \bx^{k+1}, \bx^{k+2}, \bx^{k+3}, \bx^{k+4} \}$.  Then, $\{ \bx^{k+1}, \bx^{k+2}, \bx^{k+3} \}$ lie on a shared circle of radius $\rho$.  Three non-colinear points define a unique circle.  Thus, all five points $\{ \bx^{k}, \bx^{k+1}, \bx^{k+2}, \bx^{k+3}, \bx^{k+4} \}$ lie on that same unique circle of radius $\rho$, and the induction carries.  

Without loss of generality, we proceed by showing that the first four points $\{ \bx^k :\, k=0,1,2,3 \}$ lie on a single circle of radius $\rho$.  As already noted, the velocity update in the perpendicular direction may be written 
\begin{equation}
	\bu^{n+1} = R_{\theta_n} \bu^n, 
\end{equation}
where $\theta_n$ is defined by
\begin{equation} \label{eq:thetandef}
	\cos \theta_n = \frac{1 - \Oc^2 \Dt_n^2 / 4}{1 + \Oc^2 \Dt_n^2 / 4}.
\end{equation}
This already demonstrates that the magnitude of $\bu^n$ is independent of $n$.  We denote this common magnitude by $u$ for the remainder of the proof.  

Moreover, it is a simple algebraic exercise -- see, e.g.\ \cite{ricketson2019energy} -- to show that
\begin{equation} \label{eq:halfstepu}
	\bu^{n+1/2} = \frac{\bu^n + \frac{1}{2} \Oc \Dt_n \bu^n \times \bb}{1 + \Oc^2 \Dt_n^2 / 4}.
\end{equation}
Taking the magnitude of this expression gives
\begin{equation}
	u\nph = \frac{u}{\sqrt{1 + \Oc^2 \Dt_n^2 / 4}}.
\end{equation}

One may thus write an expression for the particle displacement magnitude $\Delta x^n = \left\| \Delta \bx^n \right\| = \left\| \bx^{n+1} - \bx^n \right\|$ at each time-step:
\begin{equation}
	\Delta x^n = u\nph \Delta t_n = \frac{u \Dt_n}{\sqrt{1 + \Oc^2 \Dt_n^2 / 4}}.
\end{equation}
Using the trigonometric identity $\sin \left( \theta_n / 2 \right) = \sqrt{(1 - \cos \theta_n)/2}$ and the definition of $\theta_n$ in \eqref{eq:thetandef}, we find
\begin{equation} \label{eq:dxmag}
	\Delta x^n = 2 \rho \sin \left( \frac{\theta_n}{2} \right).
\end{equation}

Further, as discussed in the main text, the angle between the displacements $\Delta \bx^n$ and $\Delta \bx^{n+1}$ is necessarily $(\theta_n + \theta_{n+1})/2$.  For brevity,  we define $\theta_{n+1/2} = (\theta_n + \theta_{n+1})/2$.  The law of sines thus gives the following expression for the radius $r$ of the unique circle containing $\{ \bx^0, \bx^1, \bx^2 \}$:
\begin{equation} \label{eq:rexpr}
	r = \frac{1}{2} \frac{ \left\| \bx^2 - \bx^0 \right\| }{\sin \left( \pi - \frac{\theta_0 + \theta_1}{2} \right)} = \frac{1}{2} \frac{ \left\| \bx^2 - \bx^0 \right\| }{\sin \theta_{1/2}}.
\end{equation}

The numerator may be simplified by applying the law of cosines:
\begin{equation}
\begin{split}
	\left\| \bx^2 - \bx^0 \right\| &= \left[ \left(\Delta x^0 \right)^2 + \left(\Delta x^1 \right)^2 + 2 \Delta x^0 \Delta x^1 \cos \theta_{1/2} \right]^{1/2} \\
	&= 2\rho \left[ \sin^2 \frac{\theta_0}{2} + \sin^2 \frac{\theta_1}{2} + 2 \sin \frac{\theta_0}{2} \sin \frac{\theta_1}{2} \cos \theta_{1/2} \right]^{1/2}.
\end{split}
\end{equation}
This may be further simplified by applying an angle sum identity to $\cos \theta_{1/2}$:
\begin{equation} \label{eq:bigdx}
\begin{split}
	\left\| \bx^{2} - \bx^{0} \right\| &= 2 \rho \left[ \sin^2 \frac{\theta_0}{2} + \sin^2 \frac{\theta_1}{2} + 2 \sin \frac{\theta_0}{2} \sin \frac{\theta_1}{2} \left( \cos \frac{\theta_0}{2} \cos \frac{\theta_1}{2} - \sin \frac{\theta_0}{2} \sin \frac{\theta_1}{2} \right) \right]^{1/2} \\
	&= 2 \rho \left[ \sin^2 \frac{\theta_0}{2} \cos^2 \frac{\theta_1}{2} + \cos^2 \frac{\theta_0}{2} \sin^2 \frac{\theta_1}{2} + 2 \sin \frac{\theta_0}{2} \sin \frac{\theta_1}{2} \cos \frac{\theta_0}{2} \cos \frac{\theta_1}{2} \right]^{1/2} \\
	&= 2 \rho \left[ \left( \sin \frac{\theta_0}{2} \cos \frac{\theta_1}{2} + \cos \frac{\theta_0}{2} \sin \frac{\theta_1}{2} \right)^2 \right]^{1/2} \\
	&= 2 \rho \left[ \sin^2 \theta_{1/2} \right]^{1/2} \\
	&= 2 \rho \sin \theta_{1/2}.
\end{split}
\end{equation}
Substituting this into \eqref{eq:rexpr}, we have $r=\rho$.  Thus, $\{ \bx^0, \bx^1, \bx^2 \}$ lie on a circle of radius $\rho$.  As we have not used any special properties of the first three points, it also follows that $\{ \bx^1, \bx^2, \bx^3 \}$ lie on a circle of radius $\rho$.  

Now, $\{ \bx^0, \bx^1, \bx^2 \}$ and $\{ \bx^1, \bx^2, \bx^3 \}$ each lie on a circle of radius $\rho$ and they share two points.  There are thus at most two circles of radius $\rho$ on which these four points can lie.  This is depicted graphically below in Figure \ref{fig:twoCircles}.
\begin{figure}[h]
	\centering
	\includegraphics[width=0.6\textwidth]{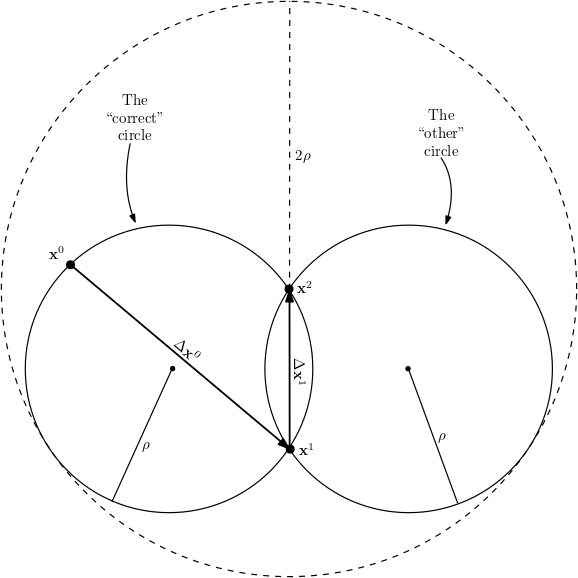}
	\caption{Illustration of the two possible (solid) circles on which $\bx^3$ may lie.  In the limit $\Oc \Dt \rightarrow \infty$, $\bx^3$ must also lie on the dashed circle, so it must lie on one of the two points of tangency.  In the text, it is shown that only the ``correct" tangency point (the leftward one) is possible.} \label{fig:twoCircles}
\end{figure}

Our aim is to show that $\bx^3$ lies on the same circle as $\{ \bx^0, \bx^1, \bx^2 \}$.  To this end, consider first the limit $\Oc \Dt_2 \rightarrow \infty$.  Clearly, according to \eqref{eq:thetandef} and \eqref{eq:dxmag}, we have $\Delta x^2 \rightarrow 2 \rho$ in this limit.  Thus, in this limit, $\bx^3$ must lie on one of the two circles of radius $\rho$ appearing in Figure \ref{fig:twoCircles}, \textit{and} on a circle of radius $2\rho$ centered at $\bx^2$ -- the dashed circle in Figure \ref{fig:twoCircles}.  $\bx^3$ thus lies on one of the points of tangency between the circles of radius $\rho$ and that of $2\rho$.  Note now that, in order to lie on the point of tangency located on the circle of radius $\rho$ that \textit{does not} contain $\bx^0$, it must be the case that the particle velocity rotates in the \textit{opposite} direction in moving from $t_2$ to $t_3$ as it has previously (or to rotate by an angle larger than $\pi$, which is also impossible).  This would require $\bB$ to change direction, which is ruled out by assumption.  

Thus, at least in the $\Oc \Dt_2 \rightarrow \infty$ limit, $\bx^3$ lies on the circle of radius $\rho$ defined by $\{ \bx^0, \bx^1, \bx^2 \}$.   We again appeal to \eqref{eq:thetandef} and \eqref{eq:dxmag} to see that the particle displacement is a smooth function of $\Dt_2$.  For $\bx^3$ to jump from one of the two radius-$\rho$ circles in Figure \ref{fig:twoCircles} to the other as $\Dt$ varies would require a discontinuity at least in the derivative $d \Delta \bx^2 / d \Dt_2$.  As no such discontinuity exists, we conclude that $\{ \bx^0, \bx^1, \bx^2, \bx^3 \}$ lie on a single circle of radius $\rho$ independent of $\Dt_n$.  

As we have nowhere exploited the fact that $\bx^0$ is the initial particle location, the same conclusion holds for $\{ \bx^n, \bx^{n+1}, \bx^{n+2}, \bx^{n+3} \}$.  Thus, the induction carries and the proof is complete.  \qed


\section{Proof of Corollary 1}
The first limit (i) is clear.  In the small step limit, the orbit converges to the analytic orbit.  The first two time-steps specify $\{ \bx^0, \bx^1, \bx^2 \}$, which in turn specify a unique circle of radius $\rho$ on which all future points lie.  Thus, if $\{ \bx^0, \bx^1, \bx^2 \}$ converge to their true values, then the numerical orbit they specify coincides with the true orbit.  

In considering the second limit (ii), we first note that the true gyro-center is located at 
\begin{equation}
	\bx_{gc} = \bx_0 + \Oc^{-1} \bu^0 \times \bb = \bx^0 + \rho \widehat{\bu}^0 \times \bb, 
\end{equation}
where $\widehat{\bu}^0 = \bu^0/u^0$.  Next, we return to \eqref{eq:halfstepu} and note that
\begin{equation}
	\lim_{\Oc \Dt_0 \rightarrow \infty} \bu^{1/2} = 2 \frac{ \bu^0 \times \bb}{\Oc \Dt_0} = 2 \frac{\rho}{\Dt_0} \widehat{\bu}^0 \times \bb.
\end{equation}
Thus, the next particle location is 
\begin{equation}
	\lim_{\Oc \Dt_0 \rightarrow \infty} \bx^1 = \bx^0 + \Dt_0 \bu^{1/2} = \bx^0 + 2\rho \widehat{\bu}^0 \times \bb.  
\end{equation}

We see that, in this limit, $\bx^1$ and $\bx^0$ are connected by a diameter of the \textit{true} gyro-orbit.  Moreover, there is a unique circle of radius $\rho$ that contains this diameter.  Thus, all future particle locations share this circle, and thus lie on the true gyro-orbit.  \qed


\section{Derivation of Time-Step Restrictions}
As noted in the main text, we consider only the case $u^{n+1/2} > v_E$.  The expression for $\bF_{eff}$ simplifies considerably in this case, allowing us to write
\begin{equation}
	\Delta \bv_{\nabla B} = \frac{\bb}{\Oc} \times \frac{\tilde{\mu}}{m} \left\{ \frac{v_\parallel}{v_\perp} \widehat{\bv}_\perp (\nabla B)_\parallel + ( \bI - 2 \widehat{\bv}_\perp \widehat{\bv}_\perp ) \cdot (\nabla B)_\perp \right\},
\end{equation}
where
\begin{equation}
	\tilde{\mu} = \frac{\Oc^2 \Dt^2 / 4}{1 + \Oc^2 \Dt^4/4} \mu
\end{equation}
for the large time-step, and analogous for the small time-step.  
We break this anomalous velocity into two components to be analyzed separately:
\begin{equation} \label{eq:v_anom_parts}
	\Delta \bv_{\nabla B}^1 = \frac{\bb}{\Oc} \times \frac{\tilde{\mu}}{m} \frac{v_\parallel}{v_\perp} (\nabla B)_\parallel \widehat{\bv}_\perp, \qquad \Delta \bv_{\nabla B}^2 = \frac{\bb}{\Oc} \times \frac{\tilde{\mu}}{m} \left( \bI - 2 \widehat{\bv}_\perp \widehat{\bv}_\perp \right) \cdot (\nabla B)_\perp.
\end{equation}
As before, quantities without superscripts are assumed to be evaluated at the half-step.

We will denote the anomalous displacements arising from these anomalous velocities by $\mathbf{a}^1$ and $\mathbf{a}^2$, defined respectively by
\begin{equation}
	\mathbf{a}^{i,n+1} = \mathbf{a}^{i,n} + \Delta t_n \Delta \mathbf{v}_{\nabla B}^{i,n+1/2}
\end{equation}
for $i=1,2$.  In the uniform time-step case of \cite{ricketson2019energy}, it was noted that these respective anomalous displacements have constant magnitude and rotate by a fixed angle at every time-step - $\theta$ for $\mathbf{a}^1$, and $2\theta$ for $\mathbf{a}^2$.  A simple geometric argument then gave formulae for the radii of the circles traversed by $\mathbf{a}^1$ and $\mathbf{a}^2$.  Bounding these radii in terms of physical parameters then led immediately to time-step restrictions.  

We follow the same general approach here, but the geometric arguments differ because the displacement magnitude is no longer constant due to the variation in time-step size.  As such, we will use similar arguments to those used in Appendix A.  To that end, note that the rotation angle of each displacement is still constant - $\psi$ for $\mathbf{a}^1$, and $2\psi$ of $\mathbf{a}^2$.  Here $\psi = \pi ( 1 - 1/m )$ is as defined in the main text.  We will denote displacements in the large and small time-steps by $\Delta \mathbf{a}^i$ and $\delta \mathbf{a}^i$, respectively.  

We begin with the second anomalous motion $\mathbf{a}^2$.  By taking magnitudes, we have
\begin{equation}
	\Delta a^2 = \frac{\Oc^2 \Dt^2 / 4}{1 + \Oc^2 \Dt^2 / 4} \frac{\mu}{m \Oc} \left\| \left( \nabla B \right)_\perp \right\| \Dt, \qquad \delta a^2 = \frac{\Oc^2 \delta t^2 / 4}{1 + \Oc^2 \delta t^2 / 4} \frac{\mu}{m \Oc} \left\| \left( \nabla B \right)_\perp \right\| \delta t.
\end{equation}
Some algebra yields
\begin{equation}
	\Delta a^2 = \left( 1 - \cos \theta \right) \frac{\delta_\perp}{4} \Oc \Dt \rho, \qquad \delta a^2 = \left( 1 - \cos \delta \theta \right) \frac{\delta_\perp}{4} \Oc \delta t \rho,
\end{equation}
where we have used
\begin{equation}
	\frac{\Oc^2 \Dt^2 / 4}{1 + \Oc^2 \Dt^2 / 4} = \frac{1}{2} \left[ 1 - \frac{1 - \Oc^2 \Dt^2 / 4}{1 + \Oc^2 \Dt^2 / 4}\right] = \frac{1 - \cos \theta}{2}
\end{equation}
along with the analogous reasoning for the small time-step.  As in Appendix A, we now use the law of sines to write an expression for the radius $r$ of a circle containing three consecutive $\mathbf{a}^2$ points:
\begin{equation}
	2r = \frac{\left[ \left( \Delta a^2 \right)^2 + \left( \delta a^2 \right)^2 + 2 \Delta a^2 \delta a^2 \cos \left( 2\psi \right) \right]^{1/2}}{\lvert \sin \left( 2\psi \right) \rvert}.
\end{equation}

Evidently, $r$ does not depend on time-step number, so we can use identical logic to Appendix A to conclude that \textit{all} $\mathbf{a}^2$ points lie on a single circle with this radius.  By using trivial bounds on geometric functions, we can find a simple upper bound on the radius of this circle:
\begin{equation}
	r \leq \frac{1}{2} \frac{\rho \delta_\perp}{\sin \left(\frac{2\pi}{m} \right)} \Oc \left( \frac{\Dt + \delta t}{2} \right).
\end{equation}
where we have substituted in the definition of $\psi$ in terms of the number of desired gyrophase samples $m$.  If the length scale we wish to resolve is $L_{res}$, a sufficient condition to make sure the anomalous displacement is smaller than that scale is thus
\begin{equation} \label{eq:restric1}
	\Oc \left( \frac{\Dt + \delta t}{2} \right) < \left( \frac{L_{res}}{\rho} \right) \frac{\sin \left( \frac{2\pi}{m} \right)}{\delta_\perp}.
\end{equation}

We now return to $\mathbf{a}^1$, which is analyzed in the same way.  Taking magnitudes, we have
\begin{equation}
	\Delta a^1 = \frac{1}{2} \frac{\Oc^3 \Dt^3 / 4}{\sqrt{1 + \Oc^2 \Dt^2 / 4}} \delta_\parallel \rho, \qquad \delta a^1 = \frac{1}{2} \frac{\Oc^3 \delta t^3 / 4}{\sqrt{1 + \Oc^2 \delta t^2 / 4}} \delta_\parallel \rho.
\end{equation}
The root in the denominator arises from the evaluation of $v_\perp$ at the half-step, and the assumption that perpendicular motion is dominated by gyration, giving $v_\perp \approx u^n / \sqrt{1 + \Oc^2 \Dt^2/4}$ and similarly for the small step.  

Anticipating that we'll resort to inequalities as before, the expressions for the displacements can be reduced to upper bounds:
\begin{equation}
	\Delta x^1 \leq \frac{1}{4} \Oc^2 \Dt^2 \delta_\parallel \rho, \qquad \delta x^1 \leq \frac{1}{4} \Oc^2 \delta t^2 \delta_\parallel \rho.
\end{equation}
Again substituting into the law of sines with the fact that the rotation angle is now $\psi$, we have
\begin{equation}
	r \leq \frac{1}{2} \frac{\rho \delta_\parallel}{\sin \left( \frac{\pi}{m} \right)} \left[ \Oc \left( \frac{\Dt + \delta t}{2} \right) \right]^2.
\end{equation}
We again requre that $2r < L_{res}$.  A sufficient condition for satisfying this is
\begin{equation} \label{eq:restric2}
	\Oc \left( \frac{\Dt + \delta t}{2} \right) < \left[ \left( \frac{L_{res}}{\rho} \right) \frac{\sin \left( \frac{\pi}{m} \right)}{\delta_\parallel} \right]^{1/2}.
\end{equation}
The time-step restriction \eqref{eq:spacerestric} in the main text simply enforces both \eqref{eq:restric1} and \eqref{eq:restric2}.  

\bibliographystyle{plain}
\bibliography{APbib}

\end{document}